\title{Regular resynchronizability of origin transducers is undecidable}
\author{Denis Kuperberg}{CNRS, LIP, ENS Lyon, France}{denis.kuperberg@ens-lyon.fr}{https://orcid.org/0000-0001-5406-717X}{}
\author{Jan Martens}{Eindhoven University of Technology, Netherlands}{j.j.m.martens@tue.nl}{}{}
\authorrunning{D. Kuperberg and J. Martens}
\keywords{transducers, origin, resynchronisation, MSO, one-way, two-way, undecidability}
\tikzstyle{before}=[->,blue,>=stealth]
\tikzstyle{after}=[->,dashed,red,>=stealth]
\newcommand{\ProofSketch}{\begin{proof}[Proof sketch.]}
\newcommand{\N}{\mathbb N}
\newcommand{\B}{\mathbb B}
\newcommand{\tuple}[1]{\langle #1 \rangle}
\newcommand{\dom}{\mathit{dom}}
\newcommand{\mypm}{\mathbin{\mathpalette\@mypm\relax}}
\newcommand{\@mypm}[2]{\ooalign{%
  \raisebox{.1\height}{$#1+$}\cr
  \smash{\raisebox{-.6\height}{$#1-$}}\cr}}
\newcommand{\sem}[1]{\llbracket #1 \rrbracket}
\newcommand{\osem}[1]{\sem{#1}_o}
\newcommand{\Runiv}{R_\mathit{univ}}
\newcommand{\Rshift}{R_\mathit{\mypm 1}}
\newcommand{\Rblock}{R_\mathit{block}}
\newcommand{\Rfirstlast}{R_\mathit{1st-to-last}}
\newcommand{\A}{\mathcal A}
\newcommand{\Tlast}{T_\mathit{first}}
\newcommand{\Tfirst}{T_\mathit{last}}
\newcommand{\Tid}{T_\mathit{id}}
\newcommand{\Trev}{T_\mathit{rev}}
\newcommand{\Ta}{T_\mathit{one-two}}
\newcommand{\Taa}{T_\mathit{two-one}}
\newcommand{\Tslow}{T_\mathit{slow}}
\newcommand{\Tfast}{T_\mathit{fast}}
\newcommand{\Rreg}{R_{reg}}
\newcommand{\Rrat}{R_{rat}}
\newcommand{\Tup}{T_\mathit{up}}
\newcommand{\Tdown}{T_\mathit{down}}
\newcommand{\trans}[1]{\stackrel{#1}{\longrightarrow}}
\newcommand{\eps}{\varepsilon}
\newcommand{\toleft}{\mathsf{left}}
\newcommand{\toright}{\mathsf{right}}
\newcommand{\orig}{\mathrm{orig}}
\newcommand{\boundtape}{\mathit{BoundTape}}
\newcommand{\pref}{\sqsubseteq}
\newcommand{\hist}{\mathit{Hist}_M}
\newcommand{\pspace}{{\sc{PSpace}}\xspace}
\newcommand{\pfail}{p_\mathit{fail}}
\begin{document}

\maketitle

\begin{abstract}
We study the relation of containment up to unknown regular resynchronization between two-way non-deterministic transducers. We show that it constitutes a preorder, and that the corresponding equivalence relation is properly intermediate between origin equivalence and classical equivalence. We give a syntactical characterization for containment of two transducers up to resynchronization, and use it to show that this containment relation is undecidable already for one-way non-deterministic transducers, and for simple classes of resynchronizations. This answers the open problem stated in recent works, asking whether this relation is decidable for two-way non-deterministic transducers.
 \end{abstract}

\newpage

\section{Introduction}

The study of transductions, that is functions and relations from words to words, is a fundamental field of theoretical computer science. Many models of transducers have been proposed, and robust notions such as \emph{regular transductions} emerged \cite{engelfriet2001mso,alur2010expressiveness}.
However, many natural problems on transductions are undecidable, for instance equivalence of one-way non-deterministic transducers \cite{Griffiths68,ibarra1978unsolvability}.


In order to circumvent this, and to obtain a better-behaved model, Boja\'nczyk introduced transducers with \emph{origin information} \cite{bojanczyk2014transducers}, where the semantics takes into account not only the input/output pair of words, but also the way the output is produced from the input.
It is shown in \cite{bojanczyk2014transducers} that translations between different models of transducers usually preserve the origin semantics, more problems become decidable, such as the equivalence between two transducers, and the model of transduction with origins is more amenable to an algebraic approach.

The fact that two transducers are origin-equivalent if they produce their output in exactly the same way can seem too strict, and prompted the idea of \emph{resynchronization}.
The idea, introduced in \cite{filiot2016equivalence}, where the main focus was the \emph{sequential uniformization} problem, and developed in \cite{bose2018origin,bose2019synthesis}, is to allow a distortion of the origins in a controlled way, in order to recognize that two transducers have a similar behaviour.

It is shown in \cite{bose2018origin}, that containment of 2-way transducers up to a fixed resynchronization is in \pspace, so no more difficult than classical containment of non-deterministic one-way automata. This covers in particular the case where the resynchronization is trivial, in which case the problem boils down to testing strict origin equivalence.

In \cite{bose2019synthesis}, the \emph{resynchronizer synthesis} problem was studied. The goal is now to decide whether there exists a resynchronizer $R$ such that containment or equivalence holds up to $R$.
Some results are obtained for two notions of resynchronizers. The first notion, introduced in \cite{filiot2016equivalence} is called \emph{rational resynchronizers}, it is specialized for $1$-way transducers, and uses an interleaving of input and output letters. The second notion is called \emph{(bounded) regular resynchronizers}, it is the focus of \cite{bose2018origin} and is defined for two-way transducers.

For rational resynchronizers, a complete picture is obtained in \cite{bose2019synthesis}: the synthesis problem is decidable for $k$-valued transducers, but undecidable in general.
For regular resynchronizers, it is shown in \cite{bose2019synthesis} that the synthesis problem is decidable for unambiguous two-way transducers, i.e. transducers that have at most one accepting run on each input word. The ambiguous case is left open. It was also shown in \cite{bose2019synthesis} that for one-way transducers, the notion of rational and regular resynchronizer do not match.
The picture for resynchronizability from previous works is summed up in this table, where the first line describes constraints on the input pair of transducers:
\vspace{-.4cm}
\begin{center}
\begin{tabular}{|c|c|c|c|}
\cline{2-4}
\multicolumn{1}{c|}{} & unambiguous & functional/finite-valued & general case \\ 
\hline
Fixed regular resync. (2-way) & \pspace & \pspace-c & \pspace-c.\\
\hline
Unknown rational resync. (1-way) & decidable & decidable & undecidable\\
\hline
Unknown regular resync. (2-way) & decidable & ? & ?\\
\hline
\end{tabular}
\end{center}
%
In this work, we tackle the general case (last question mark), and show a stronger result: the synthesis of regular resynchronizers is already undecidable for one-way transducers.

To do so, we introduce the notion of limited traversal, which characterizes whether two transducers verify a containment relation up to some unknown resynchronization. 
Outside of this undecidability proof, this notion can be used to show that some natural transducers, equivalent in the classical sense, cannot be resynchronized.
As a by-product, we also obtain that the resynchronizer synthesis problem is undecidable even if we restrict regular resynchronizers to any natural subclass containing the simple ``shifting'' resynchronizations, allowing origins to change by at most $k$ positions for a fixed bound $k$.
Our proof can also be lifted to show a different statement, emphasizing the difference between rational and regular resynchronization: even in presence of regular resynchronization, synthesis of a rational resynchronizer is undecidable.
%

\subsection*{Notations}
If $i,j\in\N$, we denote $[i,j]$ the set $\{i,i+1,\dots,j\}$.
We will note $\B:=\{0,1\}$ the set of booleans.
If $X$ is a set, we denote $X^*:=\bigcup_{i\in\N} X^i$ the set of words on alphabet $X$. The empty word is denoted $\eps$. We will denote $u\pref v$ if $u$ is a prefix of $v$.
We will denote $\Sigma$ and $\Gamma$ for arbitrary finite alphabets throughout the paper.
If $u\in\Sigma^*$, we will denote $|u|$ its length and $\dom(u)=\{1,2,\dots,|u|\}$ its set of positions.

%

\section{Transductions}

\subsection{One-way Non-deterministic Transducers}

A \emph{one-way non-deterministic transducer} (1NT) is a tuple $T=\tuple{Q,\Sigma, \Gamma, \Delta, I,F}$, where
$Q$ is a finite set of states, $\Sigma$ is a finite input alphabet, $\Gamma$ is a finite output alphabet, $\Delta\subseteq Q\times(\Sigma\cup\{\eps\}) \times \Gamma^*\times Q$ is the transition relation, $I$ is the set of initial states, and $F$ the set of final states.
A transition $(p,a,v,q)$ of $\Delta$ will be denoted as $p\trans{a|v}q$.
A \emph{run} of $T$ on an input word $u\in\Sigma^*$ is a sequence of transitions $p_0\trans{a_1|v_1} p_1\trans{a_2|v_2}\dots\trans{a_n|v_n}p_n$, such that $u=a_1a_2\dots a_n$, $p_0\in I$ and $p_n\in F$.
The \emph{output} of this run is the word $v=v_1\dots v_n$.
The \emph{relation} computed by $T$ is $\sem{T}=\{(u,v)\mid \text{there exists a run of $T$ on $u$ with output $v$}\}\subseteq \Sigma^*\times\Gamma^*$.
To avoid unnecessary special cases, we will always assume throughout the paper that the input word $u$ is not empty.
Two transducers $T_1,T_2$ are \emph{classically equivalent} if $\sem{T_1}=\sem{T_2}$.
It is known from \cite{Griffiths68} that classical equivalence of 1NTs is undecidable.

\subsection{Two-way Transducers}

In 1NTs, transitions can either leave the reading head on the same input letter, or move it one step to the right. If the possibility of moving to the left is added, we obtain the model of \emph{two-way non-deterministic transducer} (2NT).
The transition relation is now of the form $\Delta\subseteq Q\times(\Sigma\cup\{\vdash,\dashv\}) \times \Gamma^*\times\{\toleft,\toright\}\times Q$, where the symbol $\vdash$ (resp. $\dashv$) marks the beginning (resp. end) of the input word. When reading this symbol, we forbid the production of a non-empty output, and the only allowed direction for transitions is $\toright$ (resp. $\toleft$). The semantics $\sem{T}\subseteq \Sigma^*\times\Gamma^*$ of a 2NT is defined in a natural way: the output of a run $p_0\trans{a_1|v_1,d_1} p_1\trans{a_2|v_2,d_2}\dots\trans{a_n|v_n, d_n}p_n$ is $v_1v_2\dots v_n$. See \cite{bose2018origin} for a formal definition.
Notice that $\eps$-transitions are not necessary anymore, since a transition $p\trans{\eps|v}q$ can be simulated by two transitions going right then left (or left then right if the symbol $\dashv$ is reached).

If the transition relation is deterministic, i.e. if for all $(p,a)\in Q\times(\Sigma\cup\{\vdash,\dashv\})$ there exists at most one $(v,d,q)\in \Gamma^*\times\{\toleft,\toright\}\times Q$ such that $p\trans{a|v,d}q$ is a transition in $\Delta$, we say that the transducer is a \emph{two-way deterministic transducer} (2DT).

Notice that the relation defined by a 2DT $T$ is necessarily a (partial) function: for all $u\in\Sigma^*$ there is at most one $v\in\Gamma^*$ such that $(u,v)\in\sem{T}$.
The class of functions definable by 2DTs is called \emph{regular string-to-string functions}. It has equivalent characterizations, such as MSO transductions \cite{engelfriet2001mso} and streaming transducers \cite{alur2010expressiveness}.
\subsection{Origin information}
The origin semantics was introduced in \cite{bojanczyk2014transducers} as an enrichment of the classical semantics for string-to-string transductions.
The principle is that the contribution of a run of $T$ to the semantics of $T$ is not only the input/output pair $(u,v)$, but an \emph{origin graph} describing how $v$ is produced from $u$ during this run.

Formally, an origin graph is a triple $(u,v,\orig)$ where $u\in\Sigma^*$, $v\in\Gamma^*$, and $\orig:\dom(v)\to\dom(u)$ associates to each position in $v$ a position in $u$: its \emph{origin}.   
An origin graph is associated to a run of a transducer $T$ in a natural way, by mapping to each position $y$ in $v$ the position $\orig(y)$ of the reading head in $u$ when writing to this position $y$. If an output is produced by an $\eps$-transition after the whole word has been processed in a 1NT, we take the last input letter as origin.
The \emph{origin semantics} $\osem{T}$ of $T$ is the set of origin graphs associated with runs of $T$.

\begin{example}\label{ex:idreversal}
The two following 2DTs $\Tid$ and $\Trev$ are classically equivalent and compute the identity relation $\{(a^n,a^n)\mid n\in\N\}$, but their origin semantics differ, as witnessed by their unique origin graphs for input $a^6$ given below.

\begin{center}
\begin{tikzpicture}[scale=1,every node/.style={scale=1},node distance = .3cm, initial text=]
\node[initial,state] (init) {$p_0$};
\node[state,accepting, right=1cm of init] (suc) {$p_1$};

\path[->,>=stealth]
  (init) edge[loop above] node[above] {$a|a,\toright$} (init)
 (init) edge node[above] {$\dashv|\eps$} (suc)
  ;

 \node[initial,state, right=3cm of suc] (init2) {$q_0$};
\node[state,right =1cm of init2] (success2) {$q_1$};
\node[accepting,state,right =1cm of success2] (fin) {$q_2$};
\path[->,>=stealth]
  (init2) edge[loop above] node[above] {$a|\eps,\toright$} (init2)
  (init2) edge node[above] {$\dashv|\eps$} (success2)
  (success2) edge[loop above] node[above] {$a|a,\toleft$} (success2)
  (success2) edge node[above] {$\vdash|\eps$} (fin)
  ;

		\node[below left=.5cm and .5cm of init] (i1) {$a$};
		\node[right= of i1] (i2) {$a$};
		\node[right= of i2] (i3) {$a$};
		\node[right= of i3] (i4) {$a$};
		\node[right= of i4] (i5) {$a$};
		\node[right= of i5] (i6) {$a$};
		
		\node[below=1cm of i1] (o1) {$a$};
		\node[right= of o1] (o2) {$a$};
		\node[right= of o2] (o3) {$a$};
		\node[right= of o3] (o4) {$a$};
		\node[right= of o4] (o5) {$a$};
		\node[right= of o5] (o6) {$a$};

		\node[left=0.4cm of i1]	(labeli) {Input:};
		\node[left=0.4cm of o1] (labelo) {Output:};
		
		\draw [->,>=stealth] (o1) -- (i1);
		\draw [->,>=stealth] (o2) -- (i2);
		\draw [->,>=stealth] (o3) -- (i3);
		\draw [->,>=stealth] (o4) -- (i4);
		\draw [->,>=stealth] (o5) -- (i5);
		\draw [->,>=stealth] (o6) -- (i6);

\node[below right=.5cm and -.5cm of init2] (ai1) {$a$};
		\node[right= of ai1] (ai2) {$a$};
		\node[right= of ai2] (ai3) {$a$};
		\node[right= of ai3] (ai4) {$a$};
		\node[right= of ai4] (ai5) {$a$};
		\node[right= of ai5] (ai6) {$a$};
		
		\node[below=1cm of ai1] (ao1) {$a$};
		\node[right= of ao1] (ao2) {$a$};
		\node[right= of ao2] (ao3) {$a$};
		\node[right= of ao3] (ao4) {$a$};
		\node[right= of ao4] (ao5) {$a$};
		\node[right= of ao5] (ao6) {$a$};

		\draw [->,>=stealth] (ao1) -- (ai6);
		\draw [->,>=stealth] (ao2) -- (ai5);
		\draw [->,>=stealth] (ao3) -- (ai4);
		\draw [->,>=stealth] (ao4) -- (ai3);
		\draw [->,>=stealth] (ao5) -- (ai2);
		\draw [->,>=stealth] (ao6) -- (ai1);
		
\end{tikzpicture}
\end{center}

\end{example}

Two transducers are said \emph{origin equivalent} if they have the same origin semantics.
It is shown in \cite{bojanczyk2014transducers} that origin equivalence is decidable for regular transductions, and in \cite{bose2018origin} that origin equivalence is \pspace-complete for 2NTs.
See Appendix \ref{ap:ex_trans} for an example of two one-way transducers both computing the full relation $\Sigma^*\times\Gamma^*$, but not origin equivalent.

\section{MSO Resynchronizers}

While origin semantics gives a satisfying framework to recover decidability of transducer equivalence, it can be argued that this semantics is too rigid, as origin equivalence require that the output is produced in exactly the same way in both transducers.

In order to relax this constraint, the intermediate notion of \emph{resynchronization} has been introduced \cite{filiot2016equivalence,bose2018origin}. The idea is to let origins differ in a controlled way, while preserving the  input/output pair. Several notions of resynchronizations have been considered \cite{filiot2016equivalence,bose2018origin,bose2019synthesis}, we will focus in this work on MSO resynchronizers, also called regular resynchronizers.

\subsection{Regular languages and MSO}
We recall here how Monadic Second-Order logic (MSO) can be used to define languages. This framework will be then used to represent resynchronizers. 
Formulas of MSO are defined by the following grammar, where $a$ ranges over the alphabet $\Sigma$:
$$\varphi,\psi:=a(x)\mid x\leq y\mid x\in X\mid \exists x.\varphi\mid \exists X.\varphi\mid \varphi\vee\psi \mid \neg\varphi$$
Such formulas are evaluated on structures induced by finite words: the universe is the set of positions of the word, $a(x)$ means that position $x$ is labelled by letter $a$, and $x\leq y$ means that position $x$ occurs before position $y$.
Lowercase notation is used for first-order variables, ranging over positions of the word, and uppercase notation is used for second-order variables, ranging over sets of positions. 
Other classical operators such as $\wedge, \Rightarrow,\forall,=,+1,+2,\mathit{first},\mathit{last},\dots$ can be defined from this syntax and will be used freely. 
Let $\top$ be a tautology, defined for instance as $\exists x. a(x) \vee \neg (\exists x.a(x))$.

If $\varphi$ is an MSO formula and $u\in\Sigma^*$, we will note $u\models \varphi$ if $u$ is a model of $\varphi$, with classical MSO semantics.
The language $L(\varphi)$ defined by a closed formula $\varphi$ is $\{u\in\Sigma^*\mid u\models\varphi\}$.

If $\varphi$ contains free variables $X_1,\dots,X_n,x_1,\dots,x_k$, we can still define the language of $\varphi$, using an extended alphabet $\Sigma\times\B^{n+k}$. Extra boolean components at each  position are used to convey the values of free variables at this position: it is $1$ if the value of the second-order variable contains the position (resp. if the value of the first-order variable matches the position) and $0$ otherwise. The language of $\varphi$ is in this case a subset of $(\Sigma\times\B^{n+k})^*$, i.e. a set of words on $\Sigma$ enriched with valuations for the free variables.
If $I_1,\dots, I_n,i_1,\dots,i_k$  is an instantiation for the free variables of $\varphi$ in a word $u$, we will also write $(u,I_1,\dots, I_n,i_1,\dots,i_k)\models\varphi$ to signify that $u$ with this instantiation of the free variables satisfies $\varphi$.

For instance if $\varphi=\exists x.(x\in X\wedge a(x))$ uses a free second-order variable $X$, then the word $u=(a,0),(b,1),(a,1)\in (\Sigma\times\B)^*$ is a model of $\varphi$, denoted also $(aba,\{2,3\})\models\varphi$, but the word $(a,0),(b,1),(a,0)$ is not.
%
%

A language $L\subseteq(\Sigma\times\B^{n})^*$ is \emph{regular} if and only if there is a formula $\varphi$ of MSO with $n$ free variables recognizing $L$. This is equivalent to $L$ being recognizable by a deterministic finite automaton (DFA) on alphabet $\Sigma\times\B^n$ \cite{buchi60}.

\subsection{MSO Resynchronizers}
The principle behind MSO resynchronizers as defined in \cite{bose2018origin} is to describe in a regular way, with MSO formulas, how the origins can be redirected. This will induce a relation between sets of origin graphs: containment up to resynchronization.

The MSO formulas will be allowed to use a finite set of \emph{parameters}: extra information labelling the input word. This is reminiscent of the model of non-deterministic two-way transducers with \emph{common guess} \cite{bojanczyk2017classes}, where the guessing of extra parameters can be done in a consistent way through different visits of the same position in the input word.

\subsubsection{Definition}



We now define a subclass of regular resynchronizers from \cite{bose2018origin,bose2019synthesis}. We will see that for our purpose of resynchronizer synthesis, this subclass is equivalent to the full class of resynchronizers from \cite{bose2018origin,bose2019synthesis}. Intuitively, the full definition from \cite{bose2018origin,bose2019synthesis} allows to further restrict the semantics of a resynchronizer, which is not useful if we are just interested in the existence of a resynchronization between two transducers. This is further explained in Section \ref{sec:containment} and Appendix \ref{ap:alpha}.

Given an origin graph $\sigma=(u,v,\orig)$, an \emph{input parameter} is a subset of the input positions, encoded by a word on $\B$. Thus, a valuation for $m$ input parameters is given by a tuple $\bar{I} = (I_1, \dots, I_m)$ where for each $i\in[1,m], I_i \in\B^{|u|}$. 

The main differences between the following simplified definition and the one from \cite{bose2018origin,bose2019synthesis} is that we ignored output parameters (an extra labelling of the output word), and also removed extra formulas constraining the behaviour of the resynchronization with respect to both input and output parameters.

	\begin{definition}
		An MSO (or \emph{regular}) resynchronizer $R$ with $m$ input parameters is an MSO formula $\gamma$ with $m+2$ free variables $\gamma(\bar{I}, x,y)$, evaluated over the input word $u$.
	\end{definition}
	

%
%
%
Intuitively, $\gamma(\bar{I}, x,y)$ indicates that the origin $x$ of an output position can be redirected to a new origin $y$, as made precise in Definition \ref{def:semR}.
Although $R$ and $\gamma$ are actually the same object here, we will keep the two notations to maintain coherence with \cite{bose2018origin}, using $R$ for the abstract resynchronizer and $\gamma$ for the MSO formula, which is only one of the components of $R$ in \cite{bose2018origin}.
We now describe formally the semantics of an MSO resynchronizer.

	\begin{definition}\label{def:semR} \cite{bose2018origin}
		An MSO resynchronizer $R$ induces a relation $\sem{R}$ on origin graphs in the following way. If $\sigma = (u, v,\orig)$ and $\sigma' = (u',v',\orig')$ are two origin graphs, we have $(\sigma, \sigma')\in \sem{R}$ if and only if $u = u', v=v'$, and there exists input parameters $\bar{I} \in (\B^{|u|})^m$, such that for every output position $z\in dom(v)$, we have $(u, \bar{I}, \orig(z),\orig'(z)) \models \gamma$.
	\end{definition}
	

\subsubsection{Examples}\label{sec:ex_resynchr}
Plain blue arrows will represent the ``old'' origins in $\sigma$, and red dotted arrows the ``new'' origins in $\sigma'$. 

%
%
%
%
%
%
	
	\begin{example}\label{example:resync:univ}\cite{bose2018origin}
		The resynchronizer without parameters $\Runiv$, using only a tautology formula $\gamma=\top$, is called the universal resynchronizer, and resynchronizes any two origin graphs that share the same input and output. 
	\end{example}
	
	\begin{example}\label{example:oneshift}\cite{bose2018origin}
		The resynchronizer without parameters $\Rshift$ shifts all origins by exactly $1$ position left or right. This is achieved using a formula $\gamma(x,y) = (x=y+1)\vee (y=x+1)$.
	
\end{example}	

\begin{example}\label{ex:param}
The resynchronizer with one parameter defined by $\gamma= (I=\{x\}) \vee (x=y)$ allows at most one input position to be resynchronized to different origins.
\end{example}

\begin{center}
		\begin{tikzpicture}[scale=1, every node/.style={scale=1},node distance=.4cm]
		
		\node (i1) {$a$};
		\node[right= of i1] (i2) {$a$};
		\node[right= of i2] (i3) {$a$};
		\node[right= of i3] (i4) {$a$};
		\node[right= of i4] (i5) {$a$};
		\node[right= of i5] (i6) {$a$};
		
		\node[below=1cm of i1] (o1) {$b$};
		\node[below=1cm of i2] (o2) {$b$};
		\node[below=1cm of i3] (o3) {$b$};
		\node[below=1cm of i4] (o4) {$b$};
		\node[below=1cm of i5] (o5) {$b$};
		\node[below=1cm of i6] (o6) {$b$};

		\node[left=0.2cm of i1]	(labeli) {Input:};
		\node[left=0.2cm of o1] (labelo) {Output:};
		
		\draw [before] (o1) -- (i1);
		\draw [before] (o2) -- (i2);
		\draw [before] (o3) -- (i3);
		\draw [before] (o4) -- (i4);
		\draw [before] (o5) -- (i5);
		\draw [before] (o6) -- (i6);
		
		\draw [after] (o1) -- (i2);
		\draw [after] (o2) -- (i3);
		\draw [after] (o3) -- (i4);
		\draw [after] (o4) -- (i5);
		\draw [after] (o5) -- (i4);
		\draw [after] (o6) -- (i5);

		\node[right=2cm of i6] (Pi1) {$a$};
		\node[right= of Pi1] (Pi2) {$a$};
		\node[right= of Pi2] (Pi3) {$a$};
		\node[right= of Pi3] (Pi4) {$a$};
		\node[right= of Pi4] (Pi5) {$a$};
		\node[right= of Pi5] (Pi6) {$a$};
		
		\node[below=1cm of Pi1] (Po1) {$b$};
		\node[below=1cm of Pi2] (Po2) {$b$};
		\node[below=1cm of Pi3] (Po3) {$b$};
		\node[below=1cm of Pi4] (Po4) {$b$};
		\node[below=1cm of Pi5] (Po5) {$b$};
		\node[below=1cm of Pi6] (Po6) {$b$};

		\node[above right=.3cm and .5cm of i3, inner sep=0cm, anchor=center] {Example \ref{example:oneshift}};
		\node[above right=.3cm and .5cm of Pi3, inner sep=0cm,anchor=center] {Example \ref{ex:param}};
		
		
		\draw [before, transform canvas={xshift=-.2mm}] (Po1) -- (Pi1);
		\draw [before, transform canvas={xshift=-.2mm}] (Po2) -- (Pi2);
		\draw [before] (Po3) -- (Pi3);
		\draw [before] (Po4) -- (Pi3);
		\draw [before, transform canvas={xshift=-.2mm}] (Po5) -- (Pi5);
		\draw [before, transform canvas={xshift=-.2mm}] (Po6) -- (Pi6);
		
		\draw [after, transform canvas={xshift=.2mm}] (Po1) -- (Pi1);
		\draw [after, transform canvas={xshift=.2mm}] (Po2) -- (Pi2);
		\draw [after] (Po3) -- (Pi2);
		\draw [after] (Po4) -- (Pi6);
		\draw [after, transform canvas={xshift=.2mm}] (Po5) -- (Pi5);
		\draw [after, transform canvas={xshift=.2mm}] (Po6) -- (Pi6);
			
		\end{tikzpicture}

	\end{center}
	\subsection{Containment up to resynchronization}
	\begin{definition}\cite{bose2018origin}
		For a resynchronizer $R$ and two transducers $T_1,T_2$ we note $T_1 \subseteq R(T_2)$ if for every origin graph $\sigma_1 \in \osem{T_1}$, there exists $\sigma_2 \in \osem{T_2}$ such that $(\sigma_2, \sigma_1) \in \sem{R}$. 
	\end{definition}
	In other words this means that $\osem{T_1}$ is contained in the resynchronization expansion of $\osem{T_2}$.
	Examples can be found in Appendix \ref{ap:ex_resynchr}.

 For a fixed resynchronizer $R$ and a 2NT $T$, it might not be the case that $T \subseteq R(T)$, as witnessed by the resynchronizer $\Rshift$ from Example \ref{example:oneshift}. Moreover, if $T_1 \subseteq {R} (T_2)$ and $T_2 \subseteq {R}(T_3)$ it might not be the case that $T_1 \subseteq {R}(T_3)$, again this is examplified by $\Rshift$. This means that the containment relation up to a fixed resynchronizer $R$ is neither reflexive nor transitive in general.
	
	\subsection{Bounded resynchronizers}
	Note that the universal resynchronizer $\Runiv$ from Example \ref{example:resync:univ} relates any two graphs that share the same input and output. This causes the containment relation up to $\Runiv$ to boil down to classical containment, ignoring the origin information. I.e. we have $T_1 \subseteq {R_{univ}}(T_2)$ if and only if $\sem{T_1} \subseteq \sem{T_2}$.
This inclusion relation is undecidable, even in the case of one-way non-deterministic transducers \cite{Griffiths68}. Thus containment up to a fixed resynchronizer is undecidable in general, if no extra constraint is put on resynchronizers.
That is why the natural \emph{boundedness} restriction is introduced on MSO resynchronizers in \cite{bose2018origin}. 
	
	\begin{definition}\cite{bose2018origin} (Boundedness)
		A regular resynchronizer $R$ has bound $k$ if for all inputs $u$, input parameters $\bar{I}$, and target position $y\in dom(u)$, there are at most $k$ distinct positions $x_1, \dots x_k \in dom(u)$ such that $(u, \bar{I}, x_i, y) \models \gamma$ for all $i\in[1,k]$. A regular resynchronizer is bounded if it has bound $k$ for some $k\in \N$.
	\end{definition}

All examples of resynchronizations given in this paper (including Appendix) are bounded, except for $\Runiv$.
In Appendix \ref{ap:ex_resynchr}, we give examples of bounded resynchronizations that displace the origin by a distance that is not bounded.

Boundedness is a decidable property of MSO resynchronizers \cite[Prop. 15]{bose2018origin}. As stated in the next theorem, boundedness guarantees that the containment problem up to a fixed resynchronizer becomes decidable. Moreover, for any fixed bounded MSO resynchronizer, the complexity of this problem matches the complexity of containment with respect to strict origin semantics, or more simply the complexity of inclusion of non-deterministic automata.
	
\begin{theorem}\cite[Cor. 17]{bose2018origin}
For a fixed bounded MSO resynchronizer $R$ and given two 2NTs $T_1,T_2$, it is decidable in \pspace whether $T_1\subseteq R(T_2)$.
\end{theorem}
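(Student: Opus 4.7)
The plan is to reduce $T_1\subseteq R(T_2)$ to a language inclusion problem between polynomial-size nondeterministic two-way automata recognising encodings of origin graphs, which is then decidable in \pspace by on-the-fly counterexample search combined with Savitch's theorem. Concretely, I would first fix an encoding where each origin graph $\sigma=(u,v,\orig)$ is represented as a single word over an enriched alphabet, so that for any 2NT $T$ the set $\osem{T}$ of encoded origin graphs is recognised by a 2NFA of size polynomial in $|T|$; this is the same observation underlying the \pspace bound for origin equivalence mentioned in the previous section.

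The heart of the proof is then the construction of a 2NFA $A_R^{T_2}$ on encoded origin graphs whose language is exactly $\{\sigma_1 \mid \exists\,\sigma_2\in\osem{T_2},\ (\sigma_2,\sigma_1)\in\sem{R}\}$. Since $R$ is fixed, I would first compile $\gamma(\bar I,x,y)$ into a constant-size automaton $A_\gamma$ over $\Sigma\times\B^{m+2}$. On input $\sigma_1=(u,v,\orig_1)$, the automaton $A_R^{T_2}$ would nondeterministically guess the parameter labelling $\bar I$ on $u$ during an initial sweep, then simulate a run of $T_2$ on $u$ producing the same output $v$; at each output letter produced from position $x$, it would guess a candidate target origin $y$ such that $A_\gamma$ accepts $(u,\bar I,x,y)$, and verify that $y$ coincides with the origin recorded by $\orig_1$ in the encoding. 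Boundedness of $R$ by some $k$ guarantees that for each target $y$ there are only $k$ candidate sources $x$, which keeps the synchronised bookkeeping of $T_2$ and $A_\gamma$ within polynomially many states.

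Once $A_R^{T_2}$ is built, one has $T_1\subseteq R(T_2)$ iff $\osem{T_1}\subseteq L(A_R^{T_2})$, which is an inclusion between languages of 2NFAs of polynomial size, and is therefore in \pspace. The main obstacle is the global coherence of the guesses inside $A_R^{T_2}$: the parameter labelling $\bar I$, the simulated run of $T_2$, the newly guessed origins, and the matching against $\sigma_1$ must all be realised along a single two-way pass, even though the MSO formula $\gamma$ a priori requires inspecting the whole input. Boundedness is exactly what enables this coherent bookkeeping in polynomial space; in its absence, $\Runiv$ already collapses containment up to $R$ to classical inclusion, which, as recalled in the excerpt, is undecidable.
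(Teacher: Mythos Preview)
This theorem is not proved in the present paper; it is quoted from \cite{bose2018origin} (their Corollary~17) and used as a black box. There is therefore no in-paper proof to compare your attempt against.

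Your sketch has the right overall shape but leaves the decisive step unexplained, and as written the construction does not go through. You propose to simulate $T_2$ at the \emph{source} positions $x$ and, for each produced output letter, check that the target $y=\orig_1(t)$ satisfies $\gamma(x,y)$. Two problems arise. First, boundedness limits the number of sources per \emph{target}, not the number of targets per source, so it is unclear what it buys in a construction whose head sits at the source. Second, and more seriously, while your 2NFA is walking on the $u$-part of the encoding to simulate $T_2$, it must simultaneously maintain a pointer into the $v$-part to look up $\orig_1(t)$ for the current output index $t$, and it must keep the guessed labelling $\bar I$ consistent across all its two-way passes. These are exactly the ``global coherence'' issues you correctly flag as the main obstacle; saying that boundedness ``enables this coherent bookkeeping'' is where the argument is needed, not where it can stop. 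Your closing remark about $\Runiv$ explains why boundedness is \emph{necessary}, not why it is \emph{sufficient}.

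The argument in \cite{bose2018origin} is organised differently: one constructs a 2NT $T'$ (rather than a 2NFA on encodings) with $\osem{T'}=R(\osem{T_2})$, and then applies the \pspace procedure for origin containment $\osem{T_1}\subseteq\osem{T'}$ between 2NTs. The construction of $T'$ simulates $T_2$ \emph{from the target side}: the head of $T'$ sits at a target position $y$ while its finite control records which of the at most $k$ sources $x$ with $(u,\bar I,x,y)\models\gamma$ the simulated $T_2$ currently occupies, together with the state of a DFA for $\gamma$. Boundedness is precisely what makes ``which of the $\le k$ candidate sources'' a finite datum; when $T_2$ takes a step, $T'$ uses the $\gamma$-automaton to navigate to an appropriate new target and updates this index. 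That inversion---tracking the source from the target rather than the target from the source---is the idea missing from your plan.
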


\section{Resynchronizability}

We will now be interested in the containment up to an unknown bounded resynchronizer.
Let us define the relation $\preceq$ on 2NTs by $T_1\preceq T_2$ if there exists a bounded resynchronizer $R$ such that $T_1\subseteq R(T_2)$.
This relation has been introduced in \cite{bose2019synthesis}, along with the same notion with respect to rational resynchronizers.

Focusing on bounded regular resynchronizers, the following result is obtained in \cite{bose2019synthesis}:
\begin{theorem}\cite{bose2019synthesis}
The relation $\preceq$ is decidable on unambiguous 2NTs.
\end{theorem}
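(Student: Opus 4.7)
The plan is to exploit unambiguity in an essential way: since $T_1$ and $T_2$ each have at most one accepting run per input, each accepted word $u$ induces at most one origin graph $\sigma_i(u)\in\osem{T_i}$. Thus $T_1\preceq T_2$ amounts to asking whether there is a bounded regular resynchronizer $R$ such that for every $u\in\dom(\sem{T_1})$, the graph $\sigma_1(u)$ is obtained from $\sigma_2(u)$ via $R$. As a preliminary, I would verify classical containment $\sem{T_1}\subseteq\sem{T_2}$ (decidable for functional, hence unambiguous, 2NTs); without this, no resynchronizer can witness containment.

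I would then build a product two-way automaton $\A$ that, on input $u$, simultaneously simulates the unique runs of $T_1$ and $T_2$ and exposes the pair $(\sigma_2(u)(z),\sigma_1(u)(z))$ for every output position $z$. Because both runs are uniquely determined by $u$, this product is itself an unambiguous two-way device whose behaviour over $u$ is entirely captured by a regular relation $\rho_{T_1,T_2}\subseteq\dom(u)\times\dom(u)$ describing origin redirection. The question ``$T_1\preceq T_2$'' then becomes: is there a bounded MSO formula $\gamma(\bar I,x,y)$ such that, on every input $u\in\dom(\sem{T_1})$, $\rho_{T_1,T_2}$ is included in the relation defined by $\gamma$ under some choice of parameters $\bar I$?

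The characterization I aim for is: such a bounded $R$ exists iff there is a uniform bound $k\in\N$ on the ``distortion'' between the two origin functions, in the following regular sense --- on every input $u$, one can partition $\dom(u)$ into finitely many MSO-definable types so that within each type the pair $(\sigma_2(u)(z),\sigma_1(u)(z))$ for output positions $z$ ranges over at most $k$ values. The sufficient direction is constructive: from such a partition one reads off parameters $\bar I$ and a formula $\gamma$ of bounded width. Both directions reduce to testing an MSO property of the product $\A$, which is decidable because unambiguity lets us encode each $\sigma_i(u)$ by a single MSO-interpretable function rather than a guessed relation.

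The main obstacle is the necessary direction of the characterization: if no bound $k$ works, one must produce a concrete certificate refuting every bounded resynchronizer at once. The standard route is a pumping / Ramsey argument on the product $\A$: assuming unboundedness, iterate loops in the joint run to force the distortion to grow without bound, then observe that any fixed $\gamma$ with $m$ parameters and bound $k$ can only realize finitely many ``local resynchronization patterns'' around each target position, which the pumped family eventually exceeds. Formalizing this uniformly over all candidate $(\gamma,k,m)$ --- rather than for a single fixed resynchronizer --- is the technical heart of the argument, and is where unambiguity is crucial: without it, one would have to quantify existentially over runs of $T_2$ and the pumping argument would break down, which is precisely why the ambiguous case is left open.
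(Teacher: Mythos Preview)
The paper does not contain a proof of this theorem: it is stated purely as a citation of \cite{bose2019synthesis}, and the present paper immediately moves on to the ambiguous case, which is its actual subject. There is therefore nothing in this paper to compare your attempt against.

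That said, two remarks on your sketch. First, your proposed characterization (``partition $\dom(u)$ into finitely many MSO-definable types so that within each type the pair $(\sigma_2(u)(z),\sigma_1(u)(z))$ ranges over at most $k$ values'') is not clearly formulated: the origin maps send \emph{output} positions to input positions, so it is unclear what is being partitioned and why such a partition would correspond to boundedness of a resynchronizer. As written this is a plan, not a proof, and the ``technical heart'' you flag is essentially the entire argument. Second, the paper you are reading does supply a tool that would make an alternative proof much more concrete: by Theorem~\ref{thm:transdesync}, $T_1\preceq T_2$ holds iff there is a uniform bound $k$ on the traversal of pairs $(\sigma,\sigma')$ with $\sigma'\in\osem{T_1}$ and $\sigma\in\osem{T_2}$. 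For unambiguous transducers (after checking classical containment, as you correctly note), both origin graphs are determined by the input $u$, so the question becomes whether the traversal number of the single pair $(\sigma_2(u),\sigma_1(u))$ is uniformly bounded in $u$. This is a boundedness property of a quantity computable from $u$ by a two-way device, which is a far more tractable formulation than the one you outline and sidesteps the need to quantify over all candidate $(\gamma,k,m)$.
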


The problem is left open in \cite{bose2019synthesis} for general 2NTs, and this is the purpose of the present work.
Now that the necessary notions have been presented, we move to our contributions.

\subsection{Containment relation}\label{sec:containment}

Let us start by expliciting a few properties of $\preceq$. First, let us emphasize that our simplified definition of MSO resynchronizer is justified by the fact that this definition yields the same relation $\preceq$ as the one from \cite{bose2018origin, bose2019synthesis}. This is fully explicited in Appendix \ref{ap:alpha}.

This simplified definition allows us to show basic properties of the $\preceq$ relation, see Appendix \ref{ap:transitive} for a detailed proof:
\begin{lemma}\label{lem:transitive}
The relation $\preceq$ is reflexive and transitive.
\end{lemma}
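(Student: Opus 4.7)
For reflexivity, the plan is to exhibit the identity resynchronizer. I would take $R_{\mathrm{id}}$ with zero input parameters and formula $\gamma_{\mathrm{id}}(x,y) := (x = y)$. This is bounded by $1$, since for any target $y$ there is exactly one source $x = y$. For any $\sigma = (u,v,\mathrm{orig}) \in \osem{T}$, the pair $(\sigma,\sigma)$ lies in $\sem{R_{\mathrm{id}}}$ because $\mathrm{orig}(z) = \mathrm{orig}(z)$ at every output position $z$, so $T \subseteq R_{\mathrm{id}}(T)$.

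For transitivity, assume $T_1 \preceq T_2$ and $T_2 \preceq T_3$, witnessed respectively by bounded resynchronizers $R$ (with parameters $\bar{I}$, formula $\gamma$, bound $k$) and $R'$ (with parameters $\bar{I}'$, formula $\gamma'$, bound $k'$). The natural candidate is the ``sequential composition'' resynchronizer $R''$ whose parameter tuple is the concatenation of the two disjoint parameter tuples, and whose formula is
\[
\gamma''(\bar{I}, \bar{I}', x, y) \;:=\; \exists w.\; \gamma'(\bar{I}', x, w) \wedge \gamma(\bar{I}, w, y).
\]
Syntactically this is still an MSO formula with $m + m'$ set parameters and two first-order variables, so $R''$ is a well-formed regular resynchronizer.

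The bulk of the argument is then to check two things. First, $T_1 \subseteq R''(T_3)$: given $\sigma_1 \in \osem{T_1}$, use the hypothesis on $R$ to obtain $\sigma_2 \in \osem{T_2}$ and parameters $\bar{I}$ witnessing $(\sigma_2, \sigma_1) \in \sem{R}$; then apply the hypothesis on $R'$ to get $\sigma_3 \in \osem{T_3}$ and parameters $\bar{I}'$ witnessing $(\sigma_3, \sigma_2) \in \sem{R'}$. For each output position $z$, the values $\mathrm{orig}_2(z), \mathrm{orig}_3(z), \mathrm{orig}_1(z)$ are compatible with $\gamma'$ and $\gamma$ respectively, so choosing $w := \mathrm{orig}_2(z)$ as the existential witness yields $(u, \bar{I}, \bar{I}', \mathrm{orig}_3(z), \mathrm{orig}_1(z)) \models \gamma''$. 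Since $\bar{I}$ and $\bar{I}'$ were chosen globally on $u$ (independently of $z$), this gives $(\sigma_3, \sigma_1) \in \sem{R''}$. Second, $R''$ is bounded: fix $u, \bar{I}, \bar{I}', y$; the set of $w$ with $\gamma(\bar{I}, w, y)$ has size at most $k$, and for each such $w$ the set of $x$ with $\gamma'(\bar{I}', x, w)$ has size at most $k'$, so at most $k \cdot k'$ positions $x$ satisfy $\gamma''(\bar{I}, \bar{I}', x, y)$.

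The only conceptual point worth flagging is that $\sem{R}$ requires a single global choice of input parameters for the whole word, not one per output position; but this is already satisfied by the witnesses supplied by the two hypotheses, so the composition goes through without modification. Neither step presents a real obstacle — the construction is essentially the composition of binary relations on positions, internalised inside MSO.
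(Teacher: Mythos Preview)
Your proof is correct and follows essentially the same approach as the paper: the identity resynchronizer $\gamma(x,y)=(x=y)$ for reflexivity, and for transitivity the composed resynchronizer $\gamma''(\bar I,\bar I',x,y)=\exists w.\,\gamma'(\bar I',x,w)\wedge\gamma(\bar I,w,y)$ on the concatenated parameter tuple. Your explicit verification of the $k\cdot k'$ bound is in fact more complete than the paper's appendix, which constructs the same composite but omits the boundedness check.
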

Since $\preceq$ is a pre-order, it induces an equivalence relation $\sim$ on 2NTs, defined by $\sim=\preceq\cap \succeq$.
Notice that this equivalence relation is intermediate between classical equivalence and origin equivalence, but it is not immediately clear that it does not coincide with classical equivalence. 

The following claim presents two pairs of transducers (one pair of 2DTs and one pair of 1NTs) equivalent for the classical semantics, but not $\sim$-equivalent.

\begin{claim}\label{claim:revnotpreceq}
\begin{itemize}
\item The 2NTs $\Tid$ and $\Trev$ from Example \ref{ex:idreversal} are not $\sim$-equivalent.
\item The two following 1NTs $\Ta,\Taa$ have the same classical semantics $\{(a^n,a^m)\mid n\leq m\leq 2n\}$, but are not $\sim$-equivalent.

\begin{center}
\begin{tikzpicture}[node distance = 2cm, initial text=]
\node[initial,state] (init) {$p_0$};
\node[accepting,state,right of = init] (success) {$p_1$};
\node[below right=.4cm and .7cm of init,anchor=north] {Transducer $\Ta$};
\path[->,>=stealth]
  (init) edge[loop above] node[above] {$a|a$} (init)
  (init) edge node[above] {$\eps|\eps$} (success)
  (success) edge[loop above] node[above] {$a|aa$} (success)
  ;

 \node[initial,state, right=3cm of success] (init2) {$q_0$};
\node[accepting,state,right of = init2] (success2) {$q_1$};
\node[below right=.4cm and .7cm of init2,anchor=north] {Transducer $\Taa$};
\path[->,>=stealth]
  (init2) edge[loop above] node[above] {$a|aa$} (init2)
  (init2) edge node[above] {$\eps|\eps$} (success2)
  (success2) edge[loop above] node[above] {$a|a$} (success2)
  ;
\end{tikzpicture}
\end{center}
\end{itemize}
\end{claim}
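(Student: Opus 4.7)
The plan is to show, for each pair, that $\preceq$ fails in at least one direction. In both cases the two transducers admit unique origin graphs on a given input (modulo fixing the output), so a witnessing bounded regular resynchronizer $R$ with formula $\gamma(\bar I, x, y)$, $m$ set parameters, and right-degree bound $k$ must, for every $n$, produce parameters $\bar I_n$ such that the binary relation $\{(x,y) : (a^n, \bar I_n, x, y) \models \gamma\}$ contains a prescribed set of required redirection pairs.

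For $\Tid \not\preceq \Trev$: on input $a^n$ the required redirection is exactly the antidiagonal $\{(n+1-i, i) \mid i \in [1,n]\}$. I would prove, as the key lemma, that no bounded MSO formula with any fixed number of set parameters can contain the antidiagonal uniformly in $n$. The argument goes via Ehrenfeucht--Fraïssé games / pumping: for $n$ sufficiently large relative to the quantifier rank of $\gamma$, the parameter count $m$, and the bound $k$, any labelling $\bar I_n$ of $a^n$ yields, by pigeonhole on the $2^m$ possible per-position labels, a long interval of labelled periodicity within which $\gamma$ cannot distinguish the $x$-marker from its shifts by some period $p$ (depending on the labelling and on the MSO-type period). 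Starting from an antidiagonal pair $(n+1-y_0, y_0)$ with $x$-coordinate located in this interval and shifting the $x$-marker by $p, 2p, \ldots, kp$ yields $k+1$ distinct pairs $(n+1-y_0+jp, y_0)$ all satisfying $\gamma$ with the same $y_0$, contradicting the right-degree bound.

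For $\Ta \not\preceq \Taa$: similarly, for each $(n, m)$ with $n \leq m \leq 2n$, the switch points $s = 2n - m$ in $\Ta$ and $s' = m - n$ in $\Taa$ determine unique origin graphs. Taking $m = \lfloor 3n/2 \rfloor$ and computing explicitly, the required pointwise redirection from $\Taa$-origins to $\Ta$-origins contains pairs $(x,y)$ exhibiting an approximate doubling $y \approx 2x$ over a $\Theta(n)$-sized range of output positions; the same pumping argument rules out any bounded MSO formula (with fixed set parameters) from containing such a doubling line uniformly in $n$.

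The main obstacle will be carrying out the Ehrenfeucht--Fraïssé argument with adversarial parameter choices: $\bar I_n$ may depend on $n$, so one must guarantee the existence of a sufficiently long periodic labelled section regardless of $\bar I_n$, and then show that within such a section the MSO types are invariant under appropriate shifts. The crucial combinatorial input is that only $2^m$ distinct labels are available per position, forcing adequate periodicity once $n$ is large enough, which is what enables the iterated shift breaking the right-degree bound $k$.
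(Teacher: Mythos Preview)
Your high-level plan—show that the unique required redirection relation (the antidiagonal, resp.\ the ``doubling'' line) cannot sit inside any bounded MSO-definable binary relation, for any choice of parameters—is the right one, and is essentially what the paper does. But the pumping mechanism you propose has a genuine gap.

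You claim that pigeonhole on the $2^m$ possible per-position labels forces a long interval of labelled periodicity in $\bar I_n$. This is false: $\bar I_n$ is chosen by the hypothetical resynchronizer, and over any alphabet of size $\geq 2$ there exist arbitrarily long $(k{+}1)$-power-free words (already the Thue--Morse word is cube-free over two letters). Such an $\bar I_n$ has \emph{no} factor of period $p$ and length $\geq (k{+}1)p$, for any $p$, so you cannot perform the $k{+}1$ shifts of the $x$-marker that your argument needs. Even where a periodic factor exists, shifting the $x$-marker by the label-period $p$ need not preserve satisfaction of $\gamma$: the automaton for $\gamma$ reads the whole word, and its state after the prefix carrying the $x$-marker can change under such a shift. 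The same objection applies verbatim to your treatment of the $\Ta/\Taa$ pair.

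The correct pigeonhole is not on labels but on the states of a DFA $\A$ recognising $\gamma$ over $\Sigma\times\B^{m+2}$, and this is exactly how the paper proceeds. It is packaged as Lemma~\ref{lem:block} (bounded $\Leftrightarrow$ limited traversal): if some input position $z$ is traversed by $K=k\cdot|Q_\A|$ distinct sources $x_1,\dots,x_K$ on one side with targets $y_1,\dots,y_K$ on the other, split the labelled word at $z$ and pigeonhole the state reached after the prefix carrying the $x_i$-marker; the $k$ sources sharing a state then all pair with the \emph{same} target, contradicting the bound. Combined with Theorem~\ref{thm:transdesync}, the claim becomes the one-line Corollary~\ref{cor:claim}: on input $a^n$ the origin graphs are unique, and the middle input position is traversed $\Theta(n)$ times in both examples, so no uniform $k$ exists. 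Your argument is repairable by replacing the label-periodicity step with this DFA-state pigeonhole, but at that point you have reproved Lemma~\ref{lem:block}, and the limited-traversal abstraction is the cleaner and more reusable route.
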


A variant of the pair $\Ta,\Taa$ is presented in \cite[Example 5]{bose2019synthesis}, where it is claimed without proof that no bounded regular resynchronizer exists. 
A proof of Claim \ref{claim:revnotpreceq} will be obtained as a by-product of Theorem \ref{thm:transdesync} and explicited in Corollary \ref{cor:claim}.


\subsection{Limited traversal}
The goal of this section is to exhibit a pattern characterizing families of origin graphs that cannot be resynchronized with a bounded MSO resynchronizer.

	\begin{definition}\label{crossing}
		Let $\sigma=(u,v,\orig)$ and $\sigma'=(u,v,\orig')$ be two origin graphs with same input/output words. Given two input positions $x, z \in \dom(u)$, we say $x$ \emph{traverses} $z$ if there exists an output position $t\in \dom(v)$ with $orig(t)=x$ and either:
		\begin{itemize}
			\item $x \leq z $ and $orig'(t) > z$ (left to right traversal);
			\item $x \geq z$ and $orig'(t) < z$ (right to left traversal).  
		\end{itemize}
Intuitively, $x$ traverses $z$ if $x$ is resynchronized to some $y\neq z$, and $z$ is between the two positions $x,y$.

\begin{center}
	\begin{tikzpicture}[%
	scale=.9,
	every node/.style={
		text height=1ex,
		text depth=.25ex,
		node distance = .4cm,
		scale=.9,
	},
	]
		\node (i1) {$a$};
		\node[right= of i1] (i2) {$a$};
		\node[right= of i2] (i3) {$a$};
		\node[right= of i3] (i4) {$a$};
		\node[right= of i4] (i5) {$a$};
		
		\node[below=.8cm of i1] (o1) {$a$};
		\node[right= of o1] (o2) {$a$};
		\node[right= of o2] (o3) {$a$};
		\node[right= of o3] (o4) {$a$};
		\node[right= of o4] (o5) {$a$};
		
		\node[above right= 0.2cm and -0.2cm of i3] (labelx) {position $z$};
		\draw[dotted] (i3) -- (labelx);

		\node[above left= 0.2cm and -0.2cm of i2] (labelxprime) {position $x$};
		\draw[dotted] (i2) -- (labelxprime);

		\node (j1)[right = 2cm of i5] {$a$};
		\node[right= of j1] (j2) {$a$};
		\node[right= of j2] (j3) {$a$};
		\node[right= of j3] (j4) {$a$};
		\node[right= of j4] (j5) {$a$};
		
		\node[below=.8cm of j1] (m1) {$a$};
		\node[right= of m1] (m2) {$a$};
		\node[right= of m2] (m3) {$a$};
		\node[right= of m3] (m4) {$a$};
	
		\node[right= of m4] (m5) {$a$};
		
		\draw[before] (o3) -- (i2);
		\draw[after] (o3) -- (i4);
		
		\draw[before] (m3) -- (j5);
		\draw[after] (m3) -- (j2);
				
		\node[above left=0.2cm and -0.2cm of j3] (labelx1) {position $z$};
		\draw[dotted] (j3) -- (labelx1);
		
		\node[above right= 0.2cm and -0.2cm of j5] (labelxprime1) {position $x$};
		\draw[dotted] (j5) -- (labelxprime1);

		\node[below=.2cm of o3] (t1) {position $t$};
		\node[below=.2cm of m3] (t2) {position $t$};
		
		\draw[dotted] (o3) -- (t1);
		\draw[dotted] (m3) -- (t2);		
		
		\node[below=0.2cm of t1] {$x$ traverses $z$ from left to right};
		
		\node[below=0.2cm of t2] {$x$ traverses $z$ from right to left};

	\end{tikzpicture}
\end{center}

	\end{definition}
Let $k\in\N$, a pair of origin graphs $(\sigma,\sigma')$ on input/output words $(u,v)$ is said to have \emph{$k$-traversal} if for every $z\in\dom(u)$, there are at most $k$ distinct positions of $\dom(u)$ that traverse $z$.
A resynchronizer $R$ is said to have $k$-traversal if every pair of origin graphs $(\sigma, \sigma') \in \sem{R}$ has $k$-traversal. A resynchronizer $R$ has \emph{limited traversal} if there exists $k\in\N$ such that $R$ has $k$-traversal.

\newcommand{\Rparam}{\mathit{Right}}
\newcommand{\Lparam}{\mathit{Left}}
\newcommand{\Lsource}{\Lparam_i}
\newcommand{\Rsource}{\Rparam_i}
\newcommand{\Rtrav}{R_{\mathit{trav}}}
\newcommand{\Ltrav}{L_{\mathit{trav}}}

For any $k\in\N$ we want to construct a bounded resynchronizer $R_k$ that relates any pair of origin graphs that have $k$-traversal. We will use $2k$ input parameters: $\Rsource$ and $\Lsource$ for $i\in[0,k-1]$. Each parameter $\Rsource$ (resp. $\Lsource$) corresponds to a guessed set of input positions that may be redirected to the right (resp. left), but without traversing a position of the same set. For instance it is not possible for a position of $R_3$ to traverse another position of $R_3$ from left to right. Similarly, a position of $L_2$ cannot traverse another position of $L_2$ from right to left. We do not a priori require any of these sets to be disjoint from each other.  We construct $\gamma(x,y)=(x = y)\vee \Rtrav \vee\Ltrav$ to ensure this fact, where 
$$\Rtrav=\bigvee_{1\leq i \leq k} \big(x\in \Rsource \wedge x < y \wedge (\forall z\in [x+1,y]. z\not\in \Rsource)\big)$$
 verifies that positions labelled by the same $\Rsource$ do not traverse each other, and $\Ltrav$ does the same for the $\Lsource$ labels.
This achieves the description of the resynchronizer $R_k$, which will be proved correct in Lemmas \ref{lem:bounded} and \ref{lem:rk}.


\begin{lemma}\label{lem:bounded}
	The resynchronizer $R_k$ is bounded.
\end{lemma}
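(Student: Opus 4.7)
The plan is to prove that for any input word $u$, any valuation of the $2k$ input parameters, and any target position $y$, the number of source positions $x$ satisfying $\gamma(x,y)$ is bounded by $2k+1$, giving the explicit bound $2k+1$ for $R_k$.

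First I would fix $u$, $\bar{I}$, and $y$, and analyse each disjunct of $\gamma(x,y)=(x=y)\vee \Rtrav\vee \Ltrav$ separately. The disjunct $x=y$ clearly contributes exactly one position. For each $i\in[1,k]$, the $i$-th disjunct of $\Rtrav$ requires $x\in\Rsource$, $x<y$, and $[x+1,y]\cap \Rsource=\emptyset$. Note that since $y\in[x+1,y]$, this forces $y\notin\Rsource$, and then the constraint forces $x$ to be the maximum element of $\Rsource\cap[1,y-1]$. Hence for each $i$ there is at most one such $x$. Symmetrically, for each $i$, the $i$-th disjunct of $\Ltrav$ forces $x$ to be the minimum element of $\Lsource\cap[y+1,|u|]$ (and requires $y\notin\Lsource$), again contributing at most one position.

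Summing the contributions gives at most $1+k+k=2k+1$ distinct source positions $x$ satisfying $\gamma(x,y)$, establishing that $R_k$ is bounded (with bound $2k+1$).

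I do not expect any real obstacle here: once the formula $\gamma$ is parsed carefully, the argument is a counting one, and the only subtlety is noticing that the universal quantifier over $[x+1,y]$ (resp.\ $[y,x-1]$) pins down $x$ as an extremal element of $\Rsource$ (resp.\ $\Lsource$) relative to $y$, which is exactly what ensures that each parameter contributes at most one candidate.
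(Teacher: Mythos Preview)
Your proposal is correct and follows essentially the same approach as the paper: both argue that each of the $2k$ non-trivial disjuncts of $\gamma$ pins down at most one source position $x$ for a given target $y$, yielding the bound $2k+1$. Your identification of $x$ as the extremal element of $\Rsource$ (resp.\ $\Lsource$) relative to $y$ is in fact a slightly more explicit version of the paper's contradiction argument.
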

\begin{proof}
For each potential target position $y$, if two sources $x$ were labelled with the same input parameter, either one would traverse the other, or one would be at the left of $y$, which would contradict the definition of the formula. This means that if $\gamma(x,y)$ is valid then either $x=y$ or one of the parameters is used to indicate a single $x$ as source. There are only $2k$ parameters so for every input position $y$ there are at most $2k+1$ distinct positions $x$ such that $\gamma(x, y)$ is valid.
\end{proof}

\newcommand{\Rdist}{R_{\mathit{dist}}}
\newcommand{\free}{\mathit{FreeIndexes}}
\newcommand{\imin}{i_{\mathit{min}}}

	\begin{lemma}\label{lem:rk}
		If a pair of origin graphs $(\sigma,\sigma')$ has $k$-traversal, then $(\sigma, \sigma') \in \sem{R_k}$.
	\end{lemma}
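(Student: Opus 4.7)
The plan is to construct, for the given $(\sigma, \sigma')$ with $k$-traversal, an assignment of the $2k$ input parameters that witnesses $(\sigma, \sigma') \in \sem{R_k}$ per Definition~\ref{def:semR}. Output positions $t$ with $\orig'(t) = \orig(t)$ are handled by the disjunct $(x = y)$ of $\gamma$ and impose no constraint. The remaining positions split into right-shifted arcs ($\orig'(t) > \orig(t)$) and left-shifted arcs ($\orig'(t) < \orig(t)$), which I would treat using the $\Rsource$'s via $\Rtrav$ and the $\Lsource$'s via $\Ltrav$ respectively. I focus on the right case; the left is symmetric.

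For each input position $x$ that is the source of at least one right-shifted arc, let $y_R(x) := \max\{\orig'(t) : \orig(t) = x,\, \orig'(t) > x\}$. I would assign a colour $c(x) \in [1, k]$ to every such $x$ and then set $\Rsource := \{x : c(x) = i\}$. With these sets, the $\Rtrav$ clause instantiated at index $c(x)$ witnesses $\gamma(x, y)$ for every right-shifted arc of source $x$ and target $y$, provided no other right-source $x'$ with $x < x' \leq y_R(x)$ shares the colour $c(x)$. The problem therefore reduces to properly colouring, with $k$ colours, the graph on right-sources whose edges are the pairs $x_1 < x_2$ with $x_2 \leq y_R(x_1)$. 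This is an interval graph with intervals $\{[x, y_R(x)]\}_x$; since interval graphs are perfect, its chromatic number equals its clique number, i.e.\ the maximum number of such intervals simultaneously covering a single position.

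The remaining step is to bound this clique number by $k$ using the $k$-traversal hypothesis: if $x_1 < \dots < x_m$ are right-sources whose intervals all cover some point $z$, then for each $j$ with $z < y_R(x_j)$ the source $x_j$ left-to-right-traverses $z$ in the sense of Definition~\ref{crossing}, so the $m$ witnesses show $m \leq k$. The degenerate case $z = y_R(x_j)$ is handled by shifting the witness to $z - 1$, where the offending source still left-to-right-traverses, thus absorbing it into the traversal count. A mirror argument produces a $k$-colouring for the left-shifted arcs and defines the $\Lsource$'s. The combined parameter valuation satisfies the semantic condition of Definition~\ref{def:semR}, proving the lemma. The main technical subtlety is precisely the boundary case $z = y_R(x_j)$: aligning the formula's closed interval $[x+1,y]$ in $\Rtrav$ with the half-open range $[x, y-1]$ implicit in Definition~\ref{crossing} requires careful off-by-one bookkeeping in the clique-bound argument.
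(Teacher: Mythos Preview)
Your route via interval-graph perfection is a reasonable alternative to the paper's proof, which instead runs a greedy left-to-right pass over the right-sources, assigning to each the least index $i$ such that no position currently in $\Rparam_i$ traverses it, and argues that an ``error'' (no free index) would exhibit $k+1$ distinct positions traversing the current one. The two arguments are close cousins: the greedy sweep is precisely the standard algorithm witnessing that an interval graph is $\omega$-colourable, so what you gain is mainly a cleaner high-level statement at the cost of invoking perfection.

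However, the boundary fix you sketch does not work. Take $\orig=(1,2,3)$ and $\orig'=(2,3,4)$ on an input of length $4$. Each input position $j\in\{1,2,3\}$ is traversed only by the single source $j$, so this pair has $1$-traversal. Yet your closed intervals $[1,2]$ and $[2,3]$ overlap at $z=2$, giving a clique of size $2$; and indeed no choice of the single parameter $\Rparam_0$ makes $\gamma$ hold for all three arcs (the arc $(1,2)$ forces $2\notin\Rparam_0$ via the clause $\forall z\in[2,2].\,z\notin\Rparam_0$, while the arc $(2,3)$ forces $2\in\Rparam_0$). Shifting to $z-1=1$ does not help: source $2$ does not traverse position $1$, so you still cannot exhibit two sources traversing a common position. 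In general, when the common point of a clique is simultaneously the right endpoint of one interval and the left endpoint of another, neither $z$ nor $z-1$ is traversed by all members of the clique.

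This is in fact an off-by-one mismatch between the formula for $R_k$ (which quantifies over $z\in[x{+}1,y]$) and Definition~\ref{crossing} (which corresponds to $z\in[x,y{-}1]$); with the formula as written the lemma is literally false, and the paper's greedy argument has the same gap at its correctness step. If the formula is read with $[x{+}1,y{-}1]$---which still keeps $R_k$ bounded by the same counting as in Lemma~\ref{lem:bounded}---then both arguments go through cleanly: your conflict relation becomes $x_1<x_2<y_R(x_1)$, the relevant intervals are the half-open $[x,y_R(x))$, and any $m$ of them covering a point $z$ give $m$ sources each left-to-right traversing $z$, whence $m\le k$ with no degenerate case to patch.
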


	\ProofSketch

We describe an algorithm performing a left to right pass of the input word, and assigning labels $\Rparam_0,\Rparam_1,\dots,\Rparam_{k-1}$ to positions that are resynchronized to the right. We always assign to a position the minimal index currently available, in order to avoid the right traversal of any position by another position with the same label. We then show that under the hypothesis of $k$-traversal, this algorithm succeeds in finding an assignment of labels witnessing  $(\sigma, \sigma') \in \sem{R_k}$.
The same algorithm is then run in the other direction (right to left), to assign labels $\Lparam_i$.
See Appendix \ref{ap:proof_rk} for the full construction.
	\end{proof}
	
	\newcommand{\upref}{u_{\mathit{pref}}}
	\newcommand{\usuff}{u_{\mathit{suff}}}
	\newcommand{\Usuff}{U_{\mathit{suff}}}
	\begin{lemma}\label{lem:block}
		An MSO resynchronizer $R$ has limited traversal if and only if it is bounded.
	\end{lemma}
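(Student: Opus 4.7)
My plan is to prove the two directions separately; the non-trivial one is ``bounded $\Rightarrow$ limited traversal''. For ``limited traversal $\Rightarrow$ bounded'', I argue by contrapositive. Assume $R$ has $k$-traversal but is not bounded: pick $u$, parameters $\bar{I}$, a target $y$, and $2k+2$ distinct sources $x$ with $(u,\bar{I},x,y) \models \gamma$. At most one of these equals $y$, so by pigeonhole at least $k+1$ of them lie strictly on the same side of $y$; say $x_1 < \cdots < x_{k+1} < y$ (the case $y < x_1 < \cdots < x_{k+1}$ is symmetric). I then build a pair $(\sigma, \sigma') \in \sem{R}$ on some output word of length $k+1$ where output position $j$ has $\orig(j) = x_j$ and $\orig'(j) = y$; the parameters $\bar{I}$ witness membership in $\sem{R}$. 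Choosing $z = y-1$, every $x_j$ satisfies $x_j \le z < y$, so each traverses $z$ from left to right, contradicting $k$-traversal.

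For ``bounded $\Rightarrow$ limited traversal'', let $R$ have bound $k$. Since $\gamma$ is an MSO formula over finite words (with $m$ input parameters and two free first-order variables), Büchi's theorem supplies a DFA $A_\gamma$ over $\Sigma \times \B^{m+2}$ with a finite state set $Q$ recognizing the satisfying instances, the last two bit components indicating the positions of $x$ and $y$. I claim $R$ has $(2k|Q|)$-traversal. Fix $(\sigma,\sigma') \in \sem{R}$ with witness $\bar{I}$ and fix $z \in \dom(u)$. For each $x \le z$, let $q(x) \in Q$ be the state reached by running $A_\gamma$ on the prefix of $u$ up to position $z$, with $\bar{I}$ installed, the $x$-indicator set at position $x$, and the $y$-indicator unset throughout the prefix.

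The key observation is that $q(x)$ summarizes everything the automaton needs to know about the prefix: two sources $x_1, x_2 \le z$ with $q(x_1) = q(x_2)$ admit \emph{exactly the same} set of valid completions $y > z$, because from a common state acceptance depends only on where the $y$-indicator is placed in the (shared) suffix. I then partition the left-to-right traversers of $z$ by the value of $q(\cdot)$ into at most $|Q|$ classes. Any non-empty class contains an element that traverses $z$, hence admits some valid $y > z$; by the observation this single $y$ works simultaneously for every $x$ in the class, so $\gamma(u,\bar{I},x,y)$ holds for all of them, and the bound on $R$ then caps each class at $k$ elements. A symmetric argument for right-to-left traversers gives a total bound of $2k|Q|$. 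The principal obstacle is exactly this simultaneous-completion step, which hinges on the Nerode-style information that the DFA state captures at the splitting point $z$; once this is set up, both directions are short.
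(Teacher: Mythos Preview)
Your proposal is correct and follows essentially the same approach as the paper: both directions hinge on the same ideas (pigeonhole on the side of $y$ for the easy direction, and the DFA state of $\gamma$ at the split point $z$ serving as a Nerode-type summary so that sources with the same state share a common valid target for the hard direction). The only cosmetic difference is that the paper argues the ``bounded $\Rightarrow$ limited traversal'' direction by contrapositive (arbitrarily many traversers yield arbitrarily many sources for a single target via pigeonhole on the states), whereas you argue it directly by partitioning the traversers into at most $|Q|$ classes and bounding each by $k$; the explicit bound $2k|Q|$ you obtain matches the one implicit in the paper's argument.
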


\begin{proof}

Let $m$ be the number of input parameters used in $R$.

($\Rightarrow$) Assume $R$ is not bounded, and let $k\in \N$, we want to build a pair $(\sigma,\sigma')\in\sem{R}$ exhibiting $k$-traversal. Since $R$ is not bounded, there exists a word $u\in\Sigma^*$, with input parameters $\bar{I}$, a position $y$, and a set $X$ of $2k+1$ distinct positions such that for all $x\in X$, we have $(u,\bar{I},x,y)\models\gamma$. Without loss of generality, we can assume that there are $k$ distinct positions $x_1,\dots x_k$ in $X$ that are strictly to the left of $y$. Let $a\in\Gamma$ be an arbitrary output letter and $v=a^k$. We define the origin graphs $\sigma,\sigma'$ on $(u,v)$ by setting for each $i\in [1,k]$ the origin of the $i^{th}$ letter of $v$ to $x_i$ in $\sigma$ and to $y$ in $\sigma'$. As witnessed by parameters $\bar{I}$, we have $(\sigma,\sigma')\in\sem{R}$. Moreover, the input position $y-1$ is traversed from left to right by $k$ different sources. Since $k$ is arbitrarily chosen, $R$ does not have limited traversal.

%
		
($\Leftarrow$) For the other direction, assume $R$ has no limited traversal. Let $\A$ be a deterministic automaton recognizing $\gamma$, on alphabet $\Sigma_\A=\Sigma\times \B^{m+2}$, and $Q$ be the state space of $\A$.
Let $k\in\N$ be arbitrary. There exists $(\sigma, \sigma') \in \sem{R}$ a pair of origin graphs on words $(u,v)$, and a position $z\in\dom(u)$ such that, without loss of generality, $z$ is traversed by $K=k \cdot |Q|$ positions $x_1<x_2<\dots< x_K$ from left to right, i.e. $x_K\leq z$. Let $\bar{I}$ be the input parameters witnessing $(\sigma,\sigma')\in\sem{R}$. This means that for each $i\in[1,K]$ there exists $y_i>z$ with $(u,\bar{I},x_i,y_i)\models\gamma$.
Let us split the input sequence $U=(u,\bar{I})\in\Sigma_\A^*$ according to position $z$: $U= wr$, where the last letter of $w$ is in position $z$. For each $i\in [1,K]$, let $w_i\in \Sigma_{\A}^*$ be the word $w$ with two extra boolean components: the source is marked by a bit $1$ in position $x_i$, and the target is left to be defined.
We know that for each $i$ there exists $r_i\in\Sigma_\A^*$ extending $r$ with a target position such that $w_i r_i$ is accepted by $\A$. Let $q_i$ be the state reached by $\A$ after reading $w_i$. By choice of $K$, there exists $q\in Q$ such that $q_i=q$ for $k$ distinct values $i_1,\dots i_k$ of $i$.
This means that for each $j\in[1,k]$, we have $w_{i_j}r_{i_1}$ accepted by $\A$, i.e. $(u,\bar{I},x_{i_j},y_{i_1})\models\gamma$. This achieves the proof that $R$ is not bounded.
\end{proof}

\begin{theorem}\label{thm:transdesync}
	Let $T_1, T_2$ be 2NTs. Then $T_1 \preceq T_2$ if and only if there exists $k\in\N$ such that for every $\sigma'\in \osem{T_1}$, there exists $\sigma\in\osem{T_2}$ with same input/output and $(\sigma, \sigma')$ has $k$-traversal.  
\end{theorem}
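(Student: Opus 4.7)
The plan is to simply combine the three preparatory lemmas (\ref{lem:bounded}, \ref{lem:rk}, and \ref{lem:block}) with the definition of containment up to a resynchronizer. The statement is essentially an equivalence between a semantic property of $R$ (boundedness) and a combinatorial property of pairs of origin graphs ($k$-traversal), now lifted to full transducers.

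For the forward direction ($\Rightarrow$), assume $T_1 \preceq T_2$, so there exists a bounded MSO resynchronizer $R$ with $T_1 \subseteq R(T_2)$. By Lemma~\ref{lem:block}, $R$ has limited traversal, i.e.\ there is some $k\in\N$ such that every pair $(\sigma,\sigma')\in\sem{R}$ has $k$-traversal. Fix this $k$. For any $\sigma'\in\osem{T_1}$, unfolding the definition of containment up to $R$ gives some $\sigma\in\osem{T_2}$ sharing the same input/output with $\sigma'$ and satisfying $(\sigma,\sigma')\in\sem{R}$. Then $(\sigma,\sigma')$ has $k$-traversal by definition of $R$.

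For the backward direction ($\Leftarrow$), assume there is $k\in\N$ such that every $\sigma'\in\osem{T_1}$ admits some $\sigma\in\osem{T_2}$ with same input/output so that $(\sigma,\sigma')$ has $k$-traversal. Take the resynchronizer $R_k$ constructed before Lemma~\ref{lem:bounded}. Lemma~\ref{lem:bounded} tells us that $R_k$ is bounded, and Lemma~\ref{lem:rk} tells us that any pair of origin graphs with $k$-traversal (sharing input and output) lies in $\sem{R_k}$. Hence for every $\sigma'\in\osem{T_1}$ we find $\sigma\in\osem{T_2}$ with $(\sigma,\sigma')\in\sem{R_k}$, witnessing $T_1\subseteq R_k(T_2)$, and therefore $T_1\preceq T_2$.

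There is no real obstacle here: everything follows by chaining the previously established lemmas with the definition of $\preceq$. The only small subtlety to double-check is the order of arguments in the definition of containment up to $R$: the pair appearing in $\sem{R}$ is $(\sigma,\sigma')$ with $\sigma\in\osem{T_2}$ and $\sigma'\in\osem{T_1}$, which matches the convention used in Lemmas~\ref{lem:rk} and~\ref{lem:block}, so no direction swap is needed.
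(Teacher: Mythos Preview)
Your proof is correct and follows essentially the same approach as the paper: both directions are obtained by directly combining Lemmas~\ref{lem:bounded}, \ref{lem:rk}, and~\ref{lem:block} with the definition of $\preceq$. The only cosmetic difference is that the paper phrases the forward direction as a proof by contradiction (assuming no such $k$ exists while a bounded $R$ witnesses $T_1\preceq T_2$), whereas you argue it directly; the content is identical.
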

\begin{proof}
Assume such a bound $k$ exists. By Lemma \ref{lem:rk}, for every $\sigma'\in\osem{T_1}$ there exists $\sigma\in\osem{T_2}$ such that $(\sigma,\sigma')\in \sem{R_k}$. This implies $T_1 \subseteq R_k(T_2)$, and by Lemma \ref{lem:bounded} this $R_k$ is bounded thus witnessing $T_1\preceq T_2$.

Conversely, assume that no such bound $k$ exists, but that there is a bounded resynchronizer $R$ witnessing $T_1\preceq T_2$. By Lemma \ref{lem:block}, $R$ has $k$-traversal for some $k\in\N$. By assumption, there exists $\sigma'\in\osem{T_1}$ such that for all $\sigma\in\osem{T_2}$, $(\sigma,\sigma')$ does not have $k$-traversal. However, there must exists $\sigma$ such that $(\sigma,\sigma')\in\sem{R}$, contradicting the fact that $R$ has $k$-traversal.
\end{proof}

\begin{remark}
We have shown here that the resynchronizers $R_k$ are universal: if two transducers can be resynchronized, then this is witnessed by a resynchronizer $R_k$. This gives for instance a bound on the logical complexity of the MSO formulas needed in resynchronizers: the formula for $R_k$ is a disjunction of formulas using only one $\forall$ quantifier.
\end{remark}

Notice that unlike the existence of bounded resynchronizer, the notion of limited traversal is directly visible on pairs of origin graphs, and is therefore useful to prove that two transducers cannot be resynchronized. This is exemplified in the following corollary.

\begin{corollary}\label{cor:claim}
The transducers from Claim \ref{claim:revnotpreceq} are not $\sim$-equivalent. Indeed, in both cases, for a given input/output pair $(u,v)$ in the relation, only one pair $(\sigma,\sigma')$ of origin graphs is compatible with $(u,v)$, and these pairs of graphs exhibit traversal of arbitrary size.
\end{corollary}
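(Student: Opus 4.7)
The plan is to apply Theorem~\ref{thm:transdesync} contrapositively: to conclude $T_1\not\sim T_2$ it suffices to show $T_1\not\preceq T_2$, which amounts to exhibiting some $\sigma'\in\osem{T_1}$ such that for every compatible $\sigma\in\osem{T_2}$ the pair $(\sigma,\sigma')$ fails to have $k$-traversal for any fixed $k$. Since in both pairs of Claim~\ref{claim:revnotpreceq} each input/output pair $(u,v)$ admits only one compatible origin graph in each transducer, the task reduces to pinning down the unique pair of origin graphs and checking that its traversal grows unboundedly with $|u|$.

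For $(\Tid,\Trev)$, uniqueness is immediate from determinism: on input $a^n$ the origin map of $\Tid$ is the identity and that of $\Trev$ is $i\mapsto n+1-i$. Taking $\sigma$ to be the $\Trev$-graph and $\sigma'$ the $\Tid$-graph, for every $x\in[1,\lfloor n/2\rfloor]$ the output position $t=n+1-x$ satisfies $\orig(t)=x$ and $\orig'(t)=n+1-x>\lfloor n/2\rfloor$, so $x$ traverses $z=\lfloor n/2\rfloor$ from left to right. The number of sources traversing $z$ grows linearly with $n$, no fixed bound on traversal exists, and Theorem~\ref{thm:transdesync} yields $\Tid\not\preceq\Trev$.

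For $(\Ta,\Taa)$, I would first argue uniqueness: given an accepted $(a^n,a^m)$ with $n\le m\le 2n$, counting transitions forces each transducer to use its single-copy loop exactly $2n-m$ times and its double-copy loop exactly $m-n$ times, and since these loops sit on distinct states connected only by the $\varepsilon$-transition, their order is forced, so the origin graph is uniquely determined. I would then pick $m$ close to $3n/2$ and compute both maps explicitly: the $\Ta$-origin of an output position in a central window of length $\Theta(n)$ lies roughly $n/4$ positions to the right of its $\Taa$-origin, and the two block structures are offset so that these $\Theta(n)$ output positions contribute pairwise distinct input sources. Taking $\sigma$ the $\Ta$-graph, $\sigma'$ the $\Taa$-graph, and $z$ near the switch point, we obtain $\Theta(n)$ distinct sources traversing $z$ from right to left, hence $\Taa\not\preceq\Ta$ and the pair is not $\sim$-equivalent.

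The main obstacle is the combinatorial bookkeeping in the $(\Ta,\Taa)$ case: because traversal is counted by distinct input positions rather than by output letters, one has to track how the size-two output blocks of the two transducers are shifted relative to one another in the chosen central window, and verify that the traversing sources really are pairwise distinct (so that the traversal count, not merely the number of traversing output letters, is linear in $n$). Once this verification is done, Theorem~\ref{thm:transdesync} delivers non-$\sim$-equivalence for both pairs.
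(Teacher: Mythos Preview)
Your proposal is correct and follows the same approach as the paper: invoke Theorem~\ref{thm:transdesync}, observe uniqueness of the origin graphs on each side, and exhibit a position with linearly many distinct traversing sources. The paper's own argument is in fact far terser than yours (essentially the corollary statement plus two illustrative pictures), so your identification of the one genuine subtlety---that in the $(\Ta,\Taa)$ case one must count distinct input sources rather than output positions, since each double-copy input covers two outputs---goes beyond what the paper spells out; your computation that roughly $n/4$ distinct sources traverse the midpoint when $m\approx 3n/2$ is correct and completes the argument.
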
	
Here are visualizations of the phenomenon. The first picture shows a pair of graphs with $5$-traversal for $\Tid,\Trev$, witnessed by the only origin graphs on words $(a^{10},a^{10})$.
 The second picture does the same for the two 1NTs $\Ta,\Taa$, which has $3$-traversal on words $(a^{10},a^{15})$. In both cases, the input position being traversed is circled, and only origin arrows relevant to the traversal of this position are represented.

\begin{center}
	\begin{tikzpicture}[%
	every node/.style={
		font=\scriptsize,
		text height=1ex,
		text depth=.25ex,
	},
	]
	
	\def\h{-.15}

	\foreach \x in {0,.5,...,1.5}{
		\node at (\x,0) {$a$};
		\node at (\x,-2) {$a$};
	}
		\foreach \x in {2,2.5,...,4.5}{
		\node  at (\x,0) {$a$};
		\node at (\x,-2) {$a$};
	}
	
		\node[draw,circle, inner sep=.5pt]  at (2,0) {$a$};
	
	\node at (2.25,.3) {\normalsize \textcolor{blue}{$\Tid$}, \textcolor{red}{$\Trev$}};
	
	\foreach \x in {0,.5,...,2}{	
		\draw [before] (\x,-1.9) -- (\x,\h);
		\draw [after] (\x,-1.9) -- ({4.5-\x},\h);
	}
	
	
	\foreach \x in {7.5,8,...,9.5}{
	\node  at (\x,0) {$a$};
	}
	\foreach \x in {10.5,11,...,12}{
	\node  at (\x,0) {$a$};
	}
	
	\node[draw,circle, inner sep=.5pt] at (10,0) {$a$};

	\foreach \x in {6.5,7,...,13.5}{
		\node  at (\x,-2) {$a$};
	}

	\node at (10,.3) {\normalsize\textcolor{blue}{$\Ta$}, \textcolor{red}{$\Taa$}};
	

	\draw[before] (9,-1.9) -- (10,\h);
	\draw[before] (9.5,-1.9) -- (10,\h);
	\draw[before] (10,-1.9) -- (10.5,\h);
	\draw[before] (10.5,-1.9) -- (10.5,\h);
	\draw[before] (11,-1.9) -- (11,\h);
	\draw[before] (11.5,-1.9) -- (11,\h);
	
	\draw[after] (8.5,-1.9) -- (8.5,\h);	
	\draw[after] (9,-1.9) -- (8.5,\h);
	\draw[after] (9.5,-1.9) -- (9,\h);
	\draw[after] (10,-1.9) -- (9,\h);
	\draw[after] (10.5,-1.9) -- (9.5,\h);
	\draw[after] (11,-1.9) -- (9.5,\h);

	\end{tikzpicture}
\end{center}

\section{Undecidability of containment and equivalence}\label{sec:undec}
The aim of this section is to prove our main result:

\begin{theorem}\label{thm:undecidable}
		Given two 2NTs $T_1, T_2$, it is undecidable whether $T_1 \preceq T_2$. 
\end{theorem}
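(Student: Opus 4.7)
The plan is to reduce a known undecidable problem — the Post Correspondence Problem (PCP) seems the most natural candidate here — to the complement of $\preceq$. Given a PCP instance $(u_i,v_i)_{i=1}^n$, I would computably build 1NTs $T_1, T_2$ such that $T_1 \preceq T_2$ if and only if the instance has no solution. By Theorem \ref{thm:transdesync}, this amounts to showing that uniform $k$-traversal holds between $\osem{T_1}$ and $\osem{T_2}$ iff there is no PCP solution, which is a much more tractable combinatorial statement than the bare definition of $\preceq$.

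For the construction, both transducers would read an input encoding a candidate index sequence $i_1 \dots i_m$ (possibly enriched with a non-deterministically guessed letter-by-letter witness for the common word $w = u_{i_1}\cdots u_{i_m} = v_{i_1}\cdots v_{i_m}$) and compute the same classical relation, so that every $\sigma' \in \osem{T_1}$ admits at least one matching $\sigma \in \osem{T_2}$ with the same input/output. The two transducers differ in their origin strategies: $T_1$ would assign origins to the output letters following the $u$-decomposition of the candidate, while $T_2$ would assign them following the $v$-decomposition. Matching a $T_1$-graph with a $T_2$-graph then forces an alignment of the two decompositions, in the spirit of the $\Ta$/$\Taa$ example from Claim \ref{claim:revnotpreceq}.

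The two directions should then follow from this arrangement. If the candidate sequence is not a PCP solution, the $u$- and $v$-decompositions disagree at some position $p$ and the required alignment rearranges positions only within a window whose size depends on the PCP instance, yielding uniformly bounded traversal. If a solution of length $m_0$ exists, iterating it gives valid candidates of length $m_0, 2m_0, \dots$, and on these the alignment between the two decompositions spans the whole input — reminiscent of the reversal pattern of $\Tid$ vs.\ $\Trev$ — forcing linearly growing traversal. The main obstacle I anticipate is \emph{rigidity}: the non-determinism of $T_2$ must not be able to cheat by producing the correct input/output pair with an origin graph close to $T_1$'s on valid inputs. I would enforce this using block markers in the input alphabet that pin each output letter to a specific input region, so that the admissible origin assignments are essentially unique up to the choice of $u$- versus $v$-decomposition. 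Once rigidity is in place, the reduction is computable and undecidability of PCP transfers to $\preceq$.
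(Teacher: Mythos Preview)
Your reduction from PCP breaks in the direction ``solution exists $\Rightarrow$ $T_1 \not\preceq T_2$''. With the origin strategies you describe, the traversal between a $T_1$-graph and a $T_2$-graph on a candidate $i_1\dots i_m$ is controlled by the running lag $\ell_j = \big|\, |u_{i_1}\cdots u_{i_j}| - |v_{i_1}\cdots v_{i_j}|\,\big|$: the $u$-origin and the $v$-origin of any output letter differ by at most $\max_j \ell_j$ input positions (up to a constant depending on the maximal tile length). Now if $i_1\dots i_{m_0}$ is a solution, the lag returns to $0$ at $j=m_0$ because the total lengths coincide; iterating the solution $k$ times just repeats the same lag profile $k$ times, so the maximal lag stays equal to $L:=\max_{j\le m_0}\ell_j$ independently of $k$. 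Every iterate thus has $O(L)$-traversal, and by Theorem~\ref{thm:transdesync} you get $T_1 \preceq T_2$ regardless of whether the instance has a solution --- the reduction collapses. Your comparison with $\Tid/\Trev$ is off: the $u$- versus $v$-alignment is a bounded-shift pattern, not a reversal. The $\Ta/\Taa$ analogy is also misleading: there the non-determinism chooses a \emph{switch point} that drifts with the input length and creates linearly growing lag, whereas in your construction the lag profile is dictated by the tile sequence and is periodic under iteration.

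The paper sidesteps this by reducing not from an existence problem but from a \emph{boundedness} problem: given a deterministic Turing machine $M$, does $M$ use bounded tape on empty input? Via the standard domino encoding of computation histories, it builds 1NTs $\Tdown$ and $\Tup$ so that on valid histories $\Tup$ trails $\Tdown$ by exactly one configuration; the traversal is then essentially the current configuration size. Bounded tape yields a uniform shift bound (hence $\Tdown\preceq\Tup$ via a $k$-shift resynchronizer), while unbounded tape produces configurations of arbitrary size and hence unbounded traversal. This direct match between the bounded quantity in the source problem and the traversal bound is the idea your plan is missing; PCP does not supply such a quantity.
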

The result remains true if $T_1,T_2$ are 1NTs, with equivalence instead of containment, and if we restrict to any class of resynchronization that contains the ``shift resynchronizations'' : for each $k\in\N$, the $k$-shift resynchronization is defined by $\gamma(x,y)= (y\leq x\leq y+k)$.



We will proceed by reduction from the problem $\boundtape$, which asks given a deterministic Turing Machine $M$, whether it uses a bounded amount of its tape on empty input. For completeness, we prove in Appendix \ref{ap:boundtape} that this problem is undecidable, by a simple reduction from the Halting problem.
To perform the reduction from $\boundtape$ to the $\preceq$ relation, we first describe a classical construction used to encode runs of a Turing machine.

\subsection{The Domino Game}

Let $M$ be a deterministic Turing Machine with alphabet $A$, states $Q$, and transition table $\delta:Q\times A\to Q\times A\times\{\toleft,\toright\}$. Let $q_0$ (resp. $q_f$) be the initial (resp. final) state of $M$, and $B$ be the special blank symbol from the alphabet $A$, initially filling the tape.

Let $\#\notin A\cup Q$ be a new separation symbol, and $\Gamma=A\cup Q\cup\{\#\}$.

We sketch here a classical idea of using \emph{domino tiles} to simulate the run of a Turing Machine, for instance to prove undecidability of the Post Correspondence Problem \cite{PCP,Sipser}. See Appendix \ref{ap:domino} for the detailed construction of the set of tiles.

We encode successive \emph{configurations} of $M$ by words on $\Gamma^*$. The full run, or \emph{computation history} of $M$ is encoded by a finite or infinite word $\hist\in\Gamma^*\cup\Gamma^\omega$. We use a set of \emph{tiles} $D_M=\{(u_i,v_i)\in(\Gamma^*)^2\mid i\in\Sigma\}$, where $\Sigma$ is a finite alphabet of tile indexes. These tiles are designed to simulate the run of $M$ in the following sense (recall that $\pref$ stands for prefix):

\begin{lemma}\label{lem:domino}
Let $\lambda=i_1\dots i_k\in \Sigma^*$ be a sequence of tile indexes. Let $u_\lambda=u_{i_1}\dots u_{i_k}$, and $v_\lambda=q_0\#v_{i_1}\dots v_{i_k}$. If $\lambda$ is such that $u_\lambda\pref v_\lambda$, then we have $v_\lambda\pref \hist$.
\end{lemma}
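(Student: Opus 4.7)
The plan is to prove this by induction on $k = |\lambda|$. The base case $k = 0$ is immediate: $v_\lambda = q_0\#$ encodes the initial configuration of $M$ on empty input, which matches the corresponding prefix of $\hist$ by definition of the computation history.

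For the inductive step, the key property of the tile set $D_M$ is that each tile $(u_i, v_i)$ encodes a \emph{local} simulation rule of $M$: some tiles copy a non-head tape symbol across consecutive configurations, some implement one application of $\delta$ at (or adjacent to) the head, and some handle the $\#$ separator between consecutive configurations. The condition $u_\lambda \pref v_\lambda$ forces an alignment in which, beyond the leading $q_0\#$ present only in $v_\lambda$, the symbol written into $v_\lambda$ at a given position is constrained by its tile's local rule against the symbol read at the corresponding position of $u_\lambda$ (shifted by $|q_0\#|$). Decomposing $v_\lambda$ as $c_0 \# c_1 \# \cdots \# c_n \# w$, where $c_0, \ldots, c_n$ are complete configurations and $w$ is a possibly empty or partial configuration, the tile-local constraints then force $c_{i+1}$ to be the unique $M$-successor of $c_i$ for each $i < n$, and $w$ to be a prefix of the $M$-successor of $c_n$. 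Since $M$ is deterministic, this sequence of configurations coincides with the initial segment of $\hist$ of matching length, yielding $v_\lambda \pref \hist$.

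The main obstacle is the careful case analysis at the head position and at the $\#$ separators, where one must verify that the tile-local rules, when forced into alignment by the prefix condition, correctly realize one step of $\delta$ including the head's movement and state change. This depends on the explicit tile construction of Appendix \ref{ap:domino}: the argument is straightforward for copy tiles at tape positions away from the head, and reduces at head-adjacent positions to the fact that the transition tile is designed precisely to mirror the action of $\delta$. Once these local verifications are in place, the global conclusion follows by stitching together the forced configurations into a prefix of $\hist$.
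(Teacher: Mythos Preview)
Your proposal is correct and follows the standard inductive argument for this classical construction. Note that the paper does not actually give its own proof of this lemma: it defers to the classical PCP undecidability proof (citing \cite{PCP,Sipser}) and only supplies the tile set in Appendix~\ref{ap:domino}, so your sketch is essentially the textbook argument the paper is appealing to.
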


We give here an example of how a run of $M$ is encoded, and how it is reflected on tiles:
\begin{example}\label{ex:tiles}
Consider the run of $M$ encoded by $q_0\#q_0B\#aq_1\#aq_1B\#q_2ab\#\in\Gamma^*$.
This is reflected by the following sequences of tiles:

\begin{center}
\begin{tikzpicture}[	font=\scriptsize,text height=1ex, text depth=.25ex,]
		
\def\i{.3}
\def\j{.4}
\def\k{.6}

\node at (-1,\k) {\normalsize $\lambda:$};
\node at (-1,0) {\normalsize $u_\lambda:$};
\node at (-1,-\k) {\normalsize $v_\lambda:$};

\draw (.1,-\k) node{$q_0\#$};

\draw (1,\k) node{$i_1$};
\draw (1,0) +(-\j,-\i) rectangle ++(\j,\i);
\draw (1,0) node{$q_0\#$};
\draw (1,-\k) +(-\j,-\i) rectangle ++(\j,\i);
\draw (1,-\k) node{$q_0B\#$};

\draw (2,\k) node{$i_2$};
\draw (2,0) +(-\j,-\i) rectangle ++(\j,\i);
\draw (2,0) node{$q_0B$};
\draw (2,-\k) +(-\j,-\i) rectangle ++(\j,\i);
\draw (2,-\k) node{$aq_1$};

\draw (2.9,\k) node{$i_3$};
\draw (2.9,0) +(-\i,-\i) rectangle ++(\i,\i);
\draw (2.9,0) node{$\#$};
\draw (2.9,-\k) +(-\i,-\i) rectangle ++(\i,\i);
\draw (2.9,-\k) node{$\#$};

\draw (3.7,\k) node{$i_4$};
\draw (3.7,0) +(-\i,-\i) rectangle ++(\i,\i);
\draw (3.7,0) node{$a$};
\draw (3.7,-\k) +(-\i,-\i) rectangle ++(\i,\i);
\draw (3.7,-\k) node{$a$};

\draw (4.6,\k) node{$i_5$};
\draw (4.6,0) +(-\j,-\i) rectangle ++(\j,\i);
\draw (4.6,0) node{$q_1\#$};
\draw (4.6,-\k) +(-\j,-\i) rectangle ++(\j,\i);
\draw (4.6,-\k) node{$q_1B\#$};

\draw (5.6,\k) node{$i_6$};
\draw (5.6,0) +(-\j,-\i) rectangle ++(\j,\i);
\draw (5.6,0) node{$aq_1B$};
\draw (5.6,-\k) +(-\j,-\i) rectangle ++(\j,\i);
\draw (5.6,-\k) node{$q_2ab$};

\draw (6.5,\k) node{$i_7$};
\draw (6.5,0) +(-\i,-\i) rectangle ++(\i,\i);
\draw (6.5,0) node{$\#$};
\draw (6.5,-\k) +(-\i,-\i) rectangle ++(\i,\i);
\draw (6.5,-\k) node{$\#$};
\end{tikzpicture}
\end{center}
\end{example}

\subsection{From tiles to transducers}

We now build two 1NTs $\Tup$ and $\Tdown$, based on the tiles of $D_M$.
The input alphabet of these transducers is the set $\Sigma$ of indexes of tiles of $D_M$. The output alphabet is $\Gamma$. Roughly, on input $i$, $\Tup$ outputs $u_i$ and $\Tdown$ outputs $v_i$. Additionally, $\Tup$ is allowed to non-deterministically start outputting a word that is not a prefix of $u_i$, and from there output anything in $\Gamma^*$. The transducer $\Tup$ is also allowed to output anything after the end of the input. The transducer $\Tdown$ starts by outputting $q_0\#$ at the beginning of the computation, so that on input $\lambda\in\Sigma^*$ it outputs $v_\lambda$.

The transducers $\Tup,\Tdown$ are pictured here, with $W_i=\{u\in\Gamma^*,|u|\leq |u_i|, u\not\pref u_i\}$:

\begin{center}
\begin{tikzpicture}[scale=1,every node/.style={scale=1},node distance = 2cm, initial text=]
\node[initial,state] (init) {$p_0$};
\node[state, accepting, above right=.6cm and 1.5cm of init] (fail) {$\pfail$};
\node[accepting,state,below right= .6cm and 1.5cm of fail] (success) {$p_1$};

\node[below=1.1cm of fail,anchor=north] {Transducer $\Tup$};

\path[->]
(init) edge[loop above] node[above] {$i|u_i$} (init)
  (init) edge node[above=.1cm] {$i|W_i$} (fail)
  (fail) edge[loop above] node[above] {$i|\eps$, $\eps|\Gamma$} (fail)
  
  (init) edge node[above] {$\eps|\eps$} (success)
  (success) edge[loop above] node[above] {$\eps|\Gamma$} (success)
  ;

 \node[initial,state, right=3cm of success] (init2) {$s_0$};
\node[accepting,state,right of = init2] (success2) {$s_1$};
\node[below right=.2cm and .7cm of init2,anchor=north] {Transducer $\Tdown$};
\path[->]
  (init2) edge node[above] {$\eps|q_0\#$}  (success2)
  (success2) edge[loop above] node[above] {$i|v_i$} (success2)
  ;
\end{tikzpicture}
\end{center}

The main idea of this construction is that if $\lambda=i_1\dots i_k\in\Sigma^*$ is such that $u_\lambda\pref v_\lambda$ follow $\hist$ as in Example \ref{ex:tiles}, then on input $\lambda$, $\Tdown$ outputs $v_\lambda$, the only matching computation of $\Tup$ starts by outputting $u_\lambda$, and the bound on traversal will (roughly) match the size of the tape used by $M$ in this prefix of the computation. Indeed, if $\Tup$ and $\Tdown$ output the encoding of the same configuration of size $K$ on disjoint inputs, it witnesses a traversal of size roughly $K$ (``roughly'' because tiles allow up to three output letters on one input letter). The extra part of $\Tup$ is used to guarantee that $\sem{\Tdown}\subseteq\sem{\Tup}$ holds, even in cases when the input $\lambda$ does not correspond to a prefix of the computation of $M$.

\begin{example}\label{ex:updown}
Let $\lambda=i_1i_2\dots i_7$ be the sequence of tile indexes from Example \ref{ex:tiles}.
We show here a $2$-traversal exhibited by $\Tup,\Tdown$ on input $\lambda$. The traversed input position is circled, and only arrows relevant to the traversal of this position are represented.

\begin{center}
	\begin{tikzpicture}[%
	every node/.style={
		font=\scriptsize,
		text height=1ex,
		text depth=.25ex,
	},
	]
	\foreach \i in {1,...,3} {
		\node (i\i) at ({\i/2},0) {$i_\i$};

	}
	\foreach \i in {5,...,7} {
		\node (i\i) at ({\i/2},0) {$i_\i$};

	}
	
	\node[draw,circle, inner sep=.5pt] (i4) at ({4/2},0) {$i_4$};
	
	\def\offset{-2}
	\foreach[count=\i] \gam in {q_0,\#,q_0,B,\#,a,q_1,\#,a,q_1,B,\#,q_2,a,b,\#} {
		\node (o\i) at ({\offset+\i/2},-1.5) {$\gam$};
	}	
	
	\node at (2,.4) {\normalsize \textcolor{blue}{$\Tup$}, \textcolor{red}{$\Tdown$}};

		\draw [before] (o6.north) -- (i4.south);
		\draw [before] (o7.north) -- (i5.south);
		\draw [before] (o8.north) -- (i5.south);
		\draw [after] (o6.north) -- (i2.south);
		\draw [after] (o7.north) -- (i2.south);
		\draw [after] (o8.north) -- (i3.south);

	\end{tikzpicture}
\end{center}		
\end{example}

\begin{theorem}\label{thm:reduction}
We have $\Tdown\preceq\Tup$ if and only if $M\in\boundtape$.
\end{theorem}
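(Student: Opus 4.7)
The plan is to apply Theorem \ref{thm:transdesync}, which reduces $\preceq$ to the existence of a uniform traversal bound, and then to exploit the tile-row structure of Appendix \ref{ap:domino}: a configuration of length $T$ in $\hist$ contributes $\Theta(T)$ consecutive tile indexes in $\lambda$.

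For the forward direction, assume $M$ uses tape at most $K$ on empty input, so every configuration in $\hist$ has length $\leq K+1$. For each input $\lambda\in\Sigma^*$ of length $n$, $\Tdown$ has a unique run producing the origin graph $\sigma'$ on output $v_\lambda$; I will construct $\sigma\in\osem{\Tup}$ on $(\lambda,v_\lambda)$ with $O(K)$-traversal. If $u_\lambda\pref v_\lambda$, by Lemma \ref{lem:domino} $\lambda$ encodes a valid run prefix; take the $\Tup$-run that loops in $p_0$ throughout producing $u_\lambda$, then moves to $p_1$ via $\eps|\eps$ and appends the remaining letters of $v_\lambda$ by $\eps|\Gamma$ transitions. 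Since $u_\lambda$ and $v_\lambda$ are shifted by exactly one configuration, the tile-index gap between $\Tup$- and $\Tdown$-origins of any output position equals the size of one tile row, hence $O(K)$. If $u_\lambda\not\pref v_\lambda$, let $j^*$ be the first tile index where $u_\lambda$ and $v_\lambda$ diverge; use the run that fires $p_0\to\pfail$ on reading $i_{j^*}$ with $w\in W_{i_{j^*}}$ taken to be the diverging prefix of $v_\lambda$, then produces the remainder of $v_\lambda$ from $\pfail$ by alternating $\eps|\Gamma$ outputs with $i|\eps$ head moves. Both origin sequences being non-decreasing over $[j^*,n]$, I align $\pfail$'s origins exactly with $\Tdown$'s on the suffix, so traversal arises only on the short prefix before the transition, again $O(K)$.

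For the reverse direction, assume $M\notin\boundtape$ and fix $k\in\N$. Let $C$ be a uniform bound on $|u_i|,|v_i|$. Since the tape is unbounded, pick a prefix of $\hist$ reaching a configuration of length greater than $C(k+2)$, and take $\lambda$ to encode that prefix extended by one more step, so $u_\lambda\pref v_\lambda$ and the second-to-last configuration of $v_\lambda$ has length $>C(k+2)$. The key claim is that the only $\Tup$-run on $\lambda$ producing $v_\lambda$ is the $p_0$-then-$p_1$ one: any transition $p_0\to\pfail$ at some step $j$ with $w\in W_{i_j}$ requires $u_{i_1}\dots u_{i_{j-1}}w\pref v_\lambda$, but by $u_\lambda\pref v_\lambda$ this prefix also lies in $u_\lambda$, forcing $w\pref u_{i_j}$ and contradicting $w\in W_{i_j}$. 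For this forced run, let $z$ be the final tile index of the row whose bottom encodes the second-to-last configuration of $v_\lambda$. Output positions in that configuration have $\Tdown$-origins within that row, hence $\leq z$, while their $\Tup$-origins lie in the next row and take more than $k$ distinct values, all strictly greater than $z$. Thus $z$ is traversed right-to-left by more than $k$ sources, ruling out any $k$-traversal pair; by Theorem \ref{thm:transdesync}, $\Tdown\not\preceq \Tup$.

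The principal obstacle is the tile-row bookkeeping: making precise that a configuration of length $T$ contributes $\Theta(T)$ consecutive tile indexes, that each such tile contributes at least one output letter in both $u_\lambda$ and $v_\lambda$, and that the row-to-row offset between $u_\lambda$ and $v_\lambda$ equals one configuration. These combinatorial facts follow from the construction of Appendix \ref{ap:domino}, where $|u_i|,|v_i|$ are uniformly bounded by a constant independent of $K$ and $k$.
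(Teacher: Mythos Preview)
Your proposal is correct and follows essentially the same route as the paper: the same case split on $u_\lambda\pref v_\lambda$ in the forward direction, the same uniqueness argument for the $\Tup$-run (ruling out $\pfail$) in the reverse direction, and the same one-configuration offset driving the traversal count. The only cosmetic difference is that the paper exhibits the $(K{+}2)$-shift resynchronizer directly for the forward direction, whereas you bound traversal and invoke Theorem~\ref{thm:transdesync}; by Lemma~\ref{lem:block} these are equivalent. Two small points of phrasing to tighten: in the $u_\lambda\not\pref v_\lambda$ case, ``the short prefix before the transition'' is misleading---the prefix $\lambda'$ can be long; what is short is the per-position origin shift, which is what actually bounds traversal. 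And in the reverse direction, choosing $z$ as the \emph{first} index of the next row (rather than the last index of the current row) makes the strict inequality $\orig'(t)<z$ in Definition~\ref{crossing} hold cleanly.
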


\begin{proof}
First, assume $M\in\boundtape$, let $K$ be the bound on the tape size used by $M$. Let $R$ be the resynchronization that shifts by at most $K+2$ positions to the left, via $\gamma(x,y)=(y\leq x) \wedge (x\leq y+K+2)$. We claim that $\Tdown\subseteq R(\Tup)$. It is clear that $R$ is bounded.
Let $\sigma'\in\osem{\Tdown}$ be an origin graph $(\lambda,v,\orig')$. Notice that by definition of $\Tdown$, we have $v=v_\lambda= q_0\#v_{i_1}\dots v_{i_n}$ on input $\lambda = i_1 \dots i_n$. 
We now distinguish two cases:
\begin{itemize}
\item If $u_\lambda\pref v_\lambda$, then by Lemma \ref{lem:domino}, we have $v_\lambda\pref\hist$. The transducer $\Tup$ is able to output $v_\lambda$ without going through the state $\pfail$, with a shift of one configuration as seen in Example \ref{ex:updown}. It only needs to pad $u_\lambda$ with the last configuration in state $p_1$.  Let $\sigma$ be the origin graph for this run. Since the encoding of a configuration has size at most $K+2$, we have $(\sigma,\sigma')\in\sem{R}$.
\item If $u_\lambda\not\pref v_\lambda$, let $\lambda' \pref \lambda$ be the longest prefix such that $u_{\lambda'} \pref v_{\lambda}$ . Now in order to output $v_\lambda$, the transducer $\Tup$ has to output $u_{\lambda'}$ in $p_0$ when processing $\lambda'$. After processing $\lambda'$, the transducer $\Tup$ is forced to move to state $\pfail$ in order to match the output of $\Tdown$. From this state $\Tup$ is allowed to output anything from any positions, so in particular there exists a run where the remaining output of $v_{\lambda'}$ is produced immediately, then $\Tup$ synchronizes with $\Tdown$ during the next configuration encoding, and finally the rest of the desired output $v_\lambda$ is produced on the same input positions as in $\Tdown$. As before, the shift when processing $\lambda$ is at most $K+2$, and therefore this run induces an origin graph $\sigma$ with $(\sigma,\sigma')\in\sem{R}$.
\end{itemize}

We now assume $M\notin\boundtape$. We want to use Theorem \ref{thm:transdesync} to conclude that $\Tdown\not\preceq\Tup$. Let $k\in\N$, and $\lambda\in\Sigma^*$ such that $u_\lambda\pref v_\lambda$ and $u_\lambda$ is a prefix of $\hist$ witnessing a configuration of size $k+2$. Let $\sigma'$ be the only origin graph of $\Tdown$ on input $\lambda$, with output $v_\lambda$.
There is only one way for $\Tup$ to output $v_\lambda$ on input $\lambda$: it is by using a run avoiding $\pfail$. Let $\sigma\in\osem{\Tup}$ be the corresponding origin graph.
Since $\Tup$ is one configuration behind, and since a configuration of size $k+2$ is produced by at least $k$ inputs, the pair $(\sigma,\sigma')$ has a position traversed $k$ times. This is true for arbitrary $k$, so by Theorem \ref{thm:transdesync}, we can conclude that $\Tdown\not\preceq\Tup$.
\end{proof}

Since $\boundtape$ is undecidable, this achieves the proof of Theorem \ref{thm:undecidable}.

Notice that in the case where $M\in\boundtape$, the resynchronization does not need parameters, and can be restricted to some simple classes of resynchronizations. This is stated in the following corollary:

\begin{corollary}\label{cor:noparam}
Given $T_1,T_2$ two 1NTs, it is undecidable whether $T_1\preceq T_2$. This result still holds when considering any restricted class of resynchronizers that contains the $k$-shift resynchronizers.
\end{corollary}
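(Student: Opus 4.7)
The plan is to observe that the reduction from $\boundtape$ given in Theorem \ref{thm:reduction} already establishes everything needed, with minor extra bookkeeping. First I would point out that the transducers $\Tup$ and $\Tdown$ constructed in Section~\ref{sec:undec} are one-way non-deterministic: inspecting their pictures, $\Tup$ has states $p_0,\pfail,p_1$ and $\Tdown$ has states $s_0,s_1$, all of whose transitions read either an input letter from $\Sigma$ or $\eps$ and never move left. Hence the very same pair $(\Tdown,\Tup)$ witnesses undecidability of $\preceq$ already for 1NTs, which gives the first sentence of the corollary.

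For the second sentence, let $\mathcal C$ be any class of bounded regular resynchronizers that contains, for every $k\in\N$, the $k$-shift resynchronizer $R_k^{\mathit{sh}}$ defined by $\gamma_k^{\mathit{sh}}(x,y)=(y\le x\le y+k)$. Write $T_1\preceq_{\mathcal C}T_2$ if some $R\in\mathcal C$ satisfies $T_1\subseteq R(T_2)$. The plan is to show $\Tdown\preceq_{\mathcal C}\Tup \iff M\in\boundtape$, by re-examining the two directions of Theorem \ref{thm:reduction}.

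In the positive direction, if $M\in\boundtape$ with tape bound $K$, the proof of Theorem \ref{thm:reduction} already exhibits as witness a resynchronizer whose formula is exactly $\gamma(x,y)=(y\le x)\wedge (x\le y+K+2)$, i.e.\ the $(K+2)$-shift resynchronizer, which uses no input parameters. Since $\mathcal C$ contains every $k$-shift resynchronizer by assumption, this witness lies in $\mathcal C$, so $\Tdown\preceq_{\mathcal C}\Tup$. In the negative direction, if $M\notin\boundtape$ then the proof of Theorem \ref{thm:reduction} combined with Theorem \ref{thm:transdesync} shows that $\Tdown\not\preceq\Tup$, i.e.\ no bounded regular resynchronizer whatsoever witnesses containment; a fortiori none in the smaller class $\mathcal C$ does. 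Thus $\preceq_{\mathcal C}$ is undecidable on 1NTs by the same reduction from $\boundtape$.

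There is no real obstacle here beyond the observation that the positive instance of the reduction was already realised by a shift resynchronization without parameters, so no new construction is required; the only point that needs to be made explicit is that restricting to a subclass $\mathcal C$ of bounded regular resynchronizers can only make $\preceq_{\mathcal C}$ more restrictive than $\preceq$, so undecidability is inherited on the no-instance side, while the inclusion of the shift resynchronizers preserves undecidability on the yes-instance side.
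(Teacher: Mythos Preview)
Your proposal is correct and matches the paper's intended argument: the paper does not spell out a separate proof of this corollary but simply remarks, just before stating it, that the witness resynchronizer in the $M\in\boundtape$ case of Theorem~\ref{thm:reduction} is already a parameter-free $(K+2)$-shift, so the same reduction goes through for any subclass containing the shifts. Your write-up makes this explicit (noting that $\Tup,\Tdown$ are 1NTs, and that the no-instance side is inherited because $\preceq_{\mathcal C}$ refines $\preceq$), which is exactly what is needed.
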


We can also strengthen the above proof to show undecidability of equivalence up to some unknown resynchronization:
\begin{theorem}
Given $T_1,T_2$ two 1NTs, it is undecidable whether $T_1\sim T_2$.
\end{theorem}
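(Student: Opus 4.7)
The plan is to reduce from the undecidability of $\preceq$ on 1NTs established in Corollary \ref{cor:noparam}, using a standard disjoint union trick. Starting from a Turing machine $M$ and the 1NTs $\Tup,\Tdown$ built in the proof of Theorem \ref{thm:reduction}, I would define $T_1$ as the disjoint union of $\Tup$ and $\Tdown$ (obtained by adding a fresh initial state with $\eps$-transitions into the initial states of the two components, which keeps $T_1$ a 1NT), and set $T_2 := \Tup$. I will show that $T_1 \sim T_2$ holds if and only if $M \in \boundtape$, which together with undecidability of $\boundtape$ yields the result.

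The first observation is that $\osem{T_1} = \osem{\Tup} \cup \osem{\Tdown}$, since any accepting run of $T_1$ immediately commits to one of the two branches and then mimics a run of that component without altering origins. The direction $T_2 \preceq T_1$ is then immediate: every origin graph of $\Tup$ already lies in $\osem{T_1}$, so the identity resynchronizer given by $\gamma(x,y) = (x=y)$, which is bounded by $1$, witnesses $T_2 \subseteq R(T_1)$.

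The nontrivial direction is to show that $T_1 \preceq T_2$ if and only if $\Tdown \preceq \Tup$. For the ``if'' implication, combine a witness $\gamma_0(\bar I, x, y)$ for $\Tdown \preceq \Tup$ with the identity resynchronizer into the single formula $\gamma(\bar I, x, y) := \gamma_0(\bar I, x, y) \vee (x = y)$. This defines a bounded resynchronizer, because the source bound of a disjunction is at most the sum of the source bounds of its disjuncts, so the overall bound increases by at most $1$. Given any $\sigma \in \osem{T_1}$, either $\sigma \in \osem{\Tup}$ and the identity disjunct resynchronizes $\sigma$ to itself, or $\sigma \in \osem{\Tdown}$ and the formula $\gamma_0$ with the parameters supplied by the hypothesis gives a resynchronization into $\osem{\Tup}$. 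The ``only if'' implication is direct: any bounded resynchronizer $R$ witnessing $T_1 \subseteq R(T_2)$, restricted to the subset $\osem{\Tdown} \subseteq \osem{T_1}$ of its source graphs, already witnesses $\Tdown \subseteq R(\Tup)$.

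Combining the two directions with Theorem \ref{thm:reduction} gives $T_1 \sim T_2$ if and only if $M \in \boundtape$, establishing undecidability. The only routine verification required is that disjunction of MSO resynchronizer formulas preserves boundedness, which follows immediately from the definition. I do not foresee a significant obstacle here, since the real hardness is inherited from Theorem \ref{thm:undecidable} and the disjoint union construction is self-contained; the one point to double-check is that combining the two witnesses into a single formula $\gamma$ with a common set of input parameters is unproblematic, which is clear because the identity disjunct does not mention any parameter and hence imposes no constraint on $\bar I$.
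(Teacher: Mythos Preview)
Your proposal is correct and follows essentially the same approach as the paper: set $T_1=\Tup\cup\Tdown$ and $T_2=\Tup$, observe that $T_2\preceq T_1$ is trivial via the identity resynchronizer, and reduce $T_1\preceq T_2$ to $\Tdown\preceq\Tup$, which by Theorem~\ref{thm:reduction} is equivalent to $M\in\boundtape$. Your write-up merely expands the details (the disjunction of $\gamma_0$ with $x=y$ and its boundedness) that the paper leaves implicit in its one-line argument.
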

\begin{proof}
It suffices to take $\Tdown'=\Tdown\cup\Tup$ in the above proof. This way we clearly have $\Tup\preceq\Tdown'$, and the other direction $\Tdown'\preceq\Tup$ is equivalent to $\Tdown\preceq\Tup$, so it reduces to $\boundtape$ as well.
\end{proof}
%
Finally, let us mention that this proof allows us to recover and strengthen undecidability results on \emph{rational} transducers from \cite{bose2019synthesis}. We recall the definition of rational transducers in Appendix \ref{ap:rational}.

Since the shift resynchronizations are rational, and that any rational resynchronization is in particular bounded regular \cite[Theorem 3]{bose2019synthesis}, our reduction can be used in particular as an alternative proof of undecidability of rational resynchronization synthesis, shown in \cite{bose2019synthesis} via one-counter automata. 
This means we directly obtain this corollary:
\begin{restatable}{corollary}{rational}\label{cor:rational}
Given two 1NTs $T_1,T_2$ such that $\sem{T_1}\subseteq \sem{T_2}$, it is undecidable whether there exists a rational resynchronizer $\Rrat$ such that $T_1 \subseteq \Rrat(T_2)$.	
\end{restatable}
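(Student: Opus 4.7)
The plan is to reuse the construction of $\Tdown$ and $\Tup$ from the proof of Theorem~\ref{thm:reduction} and show that its reduction from $\boundtape$ already lives inside the rational setting. First I would verify the hypothesis $\sem{\Tdown} \subseteq \sem{\Tup}$ of the statement, which is needed to place the instance inside the restricted input class of the corollary. Given any pair $(\lambda, v_\lambda) \in \sem{\Tdown}$, the transducer $\Tup$ can mimic this output by moving to the sink state $\pfail$ on the first input letter via a transition of the form $i \mid w$ with $w \in W_i$ (such a $w$ exists because, in the tile construction of Section~4.1, every $u_i$ has length at least one, so $W_i \neq \emptyset$), and then producing the remainder of $v_\lambda$ along the rest of the input using the $i \mid \eps$ and $\eps \mid \Gamma$ loops at $\pfail$. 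This is a direct syntactic check.

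In the positive direction, if $M \in \boundtape$ with tape bound $K$, I would observe that the witnessing resynchronizer already constructed in the first half of the proof of Theorem~\ref{thm:reduction} is in fact rational: that proof uses the shift-by-$(K+2)$ resynchronizer $\gamma(x,y) = (y \leq x \leq y+K+2)$, which is exactly a $k$-shift and therefore rational. So $\Tdown \subseteq \Rrat(\Tup)$ holds with this rational $\Rrat$.

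In the converse direction, assume $M \notin \boundtape$. Theorem~\ref{thm:reduction} already gives $\Tdown \not\preceq \Tup$, i.e.\ no bounded regular resynchronizer relates the two transducers. I would then invoke \cite[Theorem~3]{bose2019synthesis}, stating that every rational resynchronizer is a bounded regular resynchronizer, to conclude a fortiori that no rational $\Rrat$ can satisfy $\Tdown \subseteq \Rrat(\Tup)$ either. Combined with the positive direction, this produces a reduction from $\boundtape$ to the rational resynchronizer synthesis problem, restricted to pairs of 1NTs with classical containment; undecidability then follows from undecidability of $\boundtape$.

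The main (and essentially only) new ingredient beyond what already appears in the proof of Theorem~\ref{thm:reduction} is the cited inclusion of the rational class inside the bounded regular class. Since this is available as a black box, no significant obstacle is expected; the only thing that genuinely has to be checked by hand is that the instance $(\Tdown, \Tup)$ satisfies the extra precondition $\sem{\Tdown} \subseteq \sem{\Tup}$, which, as noted above, is built into the design of $\Tup$ through the $\pfail$ state.
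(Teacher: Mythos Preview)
Your proposal is correct and matches the paper's argument exactly: the reduction of Theorem~\ref{thm:reduction} already yields a rational (shift) resynchronizer in the positive case, and in the negative case the inclusion of rational resynchronizers into bounded regular ones from \cite[Theorem~3]{bose2019synthesis} rules out any rational witness.

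One small slip worth fixing: your verification of the precondition $\sem{\Tdown}\subseteq\sem{\Tup}$ via ``go to $\pfail$ on the very first input letter'' does not quite work. The transition $i_1\mid w$ with $w\in W_{i_1}$ forces the output to begin with $w$, so you also need $w$ to be a prefix of $v_\lambda=q_0\#\dots$; when $i_1$ is the right-expansion tile with $u_{i_1}=q_0\#$, every prefix of $v_\lambda$ of length at most $2$ is already a prefix of $u_{i_1}$, and no suitable $w$ exists. The correct argument is the two-case analysis already carried out in the proof of Theorem~\ref{thm:reduction}: stay in $p_0$ along the longest prefix $\lambda'$ with $u_{\lambda'}\pref v_\lambda$, and only then branch (to $\pfail$ or to $p_1$). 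That analysis in particular exhibits, for every $\sigma'\in\osem{\Tdown}$, a $\sigma\in\osem{\Tup}$ with the same input/output pair, which is precisely the classical containment you need.
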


We can further strengthen the result via the following theorem:


\begin{restatable}{theorem}{ratreg}\label{thm:rationalundecidable}
	Given two 1NTs $T_1,T_2$ and a regular resynchronizer $\Rreg$ such that $T_1 \subseteq \Rreg(T_2)$, it is undecidable whether there exists a rational resynchronizer $\Rrat$ such that $T_1 \subseteq \Rrat(T_2)$.	
\end{restatable}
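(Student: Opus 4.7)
The plan is to reduce from $\boundtape$, adapting the construction of Theorem \ref{thm:reduction}. Given a Turing machine $M$, the goal is to build 1NTs $T_1, T_2$ together with a regular resynchronizer $\Rreg$ such that $T_1 \subseteq \Rreg(T_2)$ holds unconditionally, while a rational resynchronizer $\Rrat$ with $T_1 \subseteq \Rrat(T_2)$ exists if and only if $M \in \boundtape$. Combining this reduction with the undecidability of $\boundtape$ (Appendix \ref{ap:boundtape}) would yield the theorem.

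The construction takes $T_1 = \Tdown$ and sets $T_2 = \Tup \cup T_{aux}$, where $T_{aux}$ is a new 1NT with $\sem{T_{aux}} = \sem{\Tdown}$ chosen so that its origin graphs can be resynchronized into those of $T_1$ by some bounded regular resynchronizer with input parameters, but provably not by any rational one. A natural candidate for $T_{aux}$ imports the ``source'' side of a known separation example between bounded regular and rational resynchronizers for 1NTs from \cite{bose2019synthesis}, suitably adapted so that it realizes the relation $\sem{\Tdown}$. The resynchronizer $\Rreg$ is then assembled as a disjunction of two MSO formulas: a shift-based component valid on the $\Tup$ branch, and the parameter-using MSO component valid on the $T_{aux}$ branch. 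Since each component is bounded, so is $\Rreg$, and it witnesses $T_1 \subseteq \Rreg(T_2)$ through the $T_{aux}$ branch unconditionally.

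For the forward implication, if $M \in \boundtape$ with tape bound $K$, the $(K{+}2)$-shift resynchronizer from Theorem \ref{thm:reduction} witnesses $T_1 \subseteq \Rrat(T_2)$ with $\Rrat$ rational (shift resynchronizers being rational), by routing each $\sigma_1 \in \osem{T_1}$ entirely through the $\Tup$ branch of $T_2$. For the converse, assume $M \notin \boundtape$ and suppose a rational $\Rrat$ witnesses $T_1 \subseteq \Rrat(T_2)$. Since rational resynchronizers are bounded regular (Theorem~3 of \cite{bose2019synthesis}), $\Rrat$ has $k$-traversal for some $k$ by Lemma \ref{lem:block}. The traversal analysis in the proof of Theorem \ref{thm:reduction} shows that for any $\lambda$ with $u_\lambda \pref v_\lambda$ exhibiting a configuration of $M$ of size larger than $k$, the only $\sigma_2 \in \osem{\Tup}$ producing $v_\lambda$ induces traversal exceeding $k$ against the unique $\sigma_1 \in \osem{T_1}$. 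Infinitely many such $\lambda$ exist because $M \notin \boundtape$; for each, $\Rrat$ must therefore route $\sigma_1$ through the $T_{aux}$ branch, contradicting the rational-resistance of $T_{aux}$.

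The main obstacle is the construction of $T_{aux}$ and the proof of its rational-resistance property: it must simultaneously realize $\sem{\Tdown}$ as a 1NT, be amenable to a concrete bounded regular resynchronization formula targeting $T_1$'s origin structure, and defeat every rational resynchronizer. I would approach this by inspecting and importing the 1NT separation example underpinning the claim in \cite{bose2019synthesis} that rational and regular resynchronizers do not coincide for one-way transducers, and modifying it to match the output shape $v_\lambda$ expected by $T_1 = \Tdown$. Alternatively, a direct pumping argument on the finite-transducer characterization of rational resynchronizers (Appendix \ref{ap:rational}) could exhibit a concrete origin pattern for $T_{aux}$ that is beyond rational reach. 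Either route reduces Theorem \ref{thm:rationalundecidable} to combining this 1NT separation with the traversal analysis of Theorem \ref{thm:reduction}.
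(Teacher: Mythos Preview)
Your overall architecture matches the paper's: augment $\Tup$ with an auxiliary branch so that a bounded regular resynchronizer witnesses $\Tdown \subseteq \Rreg(T_2)$ unconditionally, while any rational resynchronizer is forced onto the $\Tup$ branch and hence exists iff $M\in\boundtape$.

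Where you differ is in treating the construction of $T_{aux}$ as the ``main obstacle'' and reaching for a separation example from \cite{bose2019synthesis}. The paper's choice is much simpler than you anticipate: the auxiliary branch just outputs an arbitrary word in $\Gamma^*$ with \emph{all} origins at the first input position, then silently consumes the rest of the input. The witnessing regular resynchronizer is then $\gamma(x,y)=\mathit{first}(x)$ --- bounded with bound $1$, no input parameters needed (so your disjunction with a shift component is superfluous; the auxiliary branch alone carries $\Rreg$). For rational-resistance, the paper uses exactly the direct pumping argument you list as your ``alternative'': if a rational $\Rrat$ routed arbitrarily long $\sigma'\in\osem{\Tdown}$ through the auxiliary branch, it would accept arbitrarily long pairs of the shape
\[
\Bigl(\, i_1 v_1 v_2 \cdots v_n\, i_2 \cdots i_n\,,\ \ i_1 v_1 i_2 v_2 \cdots i_n v_n \,\Bigr),
\]
and pumping a factor of length $\geq 2$ inside the segment $\binom{v_1}{v_1}\binom{v_2}{i_2}\binom{v_3}{v_2}\cdots$ produces a pair whose $\Gamma$-projections disagree, contradicting that $\Rrat$ is a resynchronizer. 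Hence for long enough inputs $\Rrat$ must use the $\Tup$ branch, and Theorem~\ref{thm:reduction} finishes as you outlined. In short: your plan is correct, and the pumping route you flagged as secondary is in fact the whole argument; no imported separation example is needed.
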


Due to space constraints, the proof is presented in Appendix \ref{ap:rational}.



\section{Conclusion}
In this work we investigated the containment relation on transducers up to unknown regular resynchronization. We showed that this relation forms a pre-order, strictly between classical containment and containment with respect to origin semantics. We introduced a syntactical condition called limited traversal, characterizing resynchronizable transducers pairs. Using this tool we proved that the resynchronizer synthesis is undecidable already in the case of 1NTs, while the problem was left open for 2NTs in \cite{bose2019synthesis}.

We leave open the decidability of the resynchronizability relation on \emph{functional} transducers. Since our construction highly uses non-functionality, it seems a different approach is needed. 

\newpage
\bibliography{bibliography}
\newpage
\appendix
\section{Appendix}

\subsection{Examples of transducers}\label{ap:ex_trans}

\begin{example}\label{ex:fulltrans}
Two equivalent transducers computing the full relation $\Sigma^*\times\Gamma^*$. Notice that $\eps$-transitions are necessary to compute this relation.
\begin{center}
\begin{tikzpicture}[node distance = 2cm, initial text=]
\node[initial,state] (init) {$p_0$};
\node[accepting,state,right of = init] (success) {$p_1$};

\path[->]
  (init) edge[loop above] node[above] {$\Sigma|\eps$} (init)
  (init) edge node[above] {$\eps|\eps$} (success)
  (success) edge[loop above] node[above] {$\eps|\Gamma$} (success)
  ;

 \node[initial,state, right=3cm of success] (init2) {$q_0$};
\node[accepting,state,right of = init2] (success2) {$q_1$};

\path[->]
  (init2) edge[loop above] node[above] {$\eps|\Gamma$} (init2)
  (init2) edge node[above] {$\eps|\eps$} (success2)
  (success2) edge[loop above] node[above] {$\Sigma|\eps$} (success2)
  ;
\end{tikzpicture}
\end{center}
	
\end{example}

\begin{example}\label{ex:fullorigin}
Consider the two transducers from Example \ref{ex:fulltrans} with $\Sigma=\{a,b\}$ and $\Gamma=\{c,d\}$.
Although they are equivalent in the classical sense as they compute the full relation $\Sigma^*\times \Gamma^*$, their origin semantics is different, as witnessed by the following examples of origin graphs on input $u=abbaba$ and output $v=cdddcc$. 

\begin{center}
\begin{tikzpicture}[node distance = .3cm, initial text=]
\node[initial,state] (init) {$p_0$};
\node[accepting,state,right =2cm of init] (success) {$p_1$};

\path[->]
  (init) edge[loop above] node[above] {$a,b|\eps$} (init)
  (init) edge node[above] {$\eps|\eps$} (success)
  (success) edge[loop above] node[above] {$\eps|c,d$} (success)
  ;

 \node[initial,state, right=3cm of success] (init2) {$q_0$};
\node[accepting,state,right=2cm of init2] (success2) {$q_1$};

\path[->]
  (init2) edge[loop above] node[above] {$\eps|c,d$} (init2)
  (init2) edge node[above] {$\eps|\eps$} (success2)
  (success2) edge[loop above] node[above] {$a,b|\eps$} (success2)
  ;
		
		\node[below left=.5cm and .5cm of init] (i1) {$a$};
		\node[right= of i1] (i2) {$b$};
		\node[right= of i2] (i3) {$b$};
		\node[right= of i3] (i4) {$a$};
		\node[right= of i4] (i5) {$b$};
		\node[right= of i5] (i6) {$a$};
		
		\node[below=1cm of i1] (o1) {$c$};
		\node[right= of o1] (o2) {$d$};
		\node[right= of o2] (o3) {$d$};
		\node[right= of o3] (o4) {$d$};
		\node[right= of o4] (o5) {$c$};
		\node[right= of o5] (o6) {$c$};

		\node[left=0.4cm of i1]	(labeli) {Input:};
		\node[left=0.4cm of o1] (labelo) {Output:};
		
		\draw [->,>=stealth] (o1) -- (i6);
		\draw [->,>=stealth] (o2) -- (i6);
		\draw [->,>=stealth] (o3) -- (i6);
		\draw [->,>=stealth] (o4) -- (i6);
		\draw [->,>=stealth] (o5) -- (i6);
		\draw [->,>=stealth] (o6) -- (i6);

\node[below right=.5cm and -.5cm of init2] (ai1) {$a$};
		\node[right= of ai1] (ai2) {$b$};
		\node[right= of ai2] (ai3) {$b$};
		\node[right= of ai3] (ai4) {$a$};
		\node[right= of ai4] (ai5) {$b$};
		\node[right= of ai5] (ai6) {$a$};
		
		\node[below=1cm of ai1] (ao1) {$c$};
		\node[right= of ao1] (ao2) {$d$};
		\node[right= of ao2] (ao3) {$d$};
		\node[right= of ao3] (ao4) {$d$};
		\node[right= of ao4] (ao5) {$c$};
		\node[right= of ao5] (ao6) {$c$};

		\draw [->,>=stealth] (ao1) -- (ai1);
		\draw [->,>=stealth] (ao2) -- (ai1);
		\draw [->,>=stealth] (ao3) -- (ai1);
		\draw [->,>=stealth] (ao4) -- (ai1);
		\draw [->,>=stealth] (ao5) -- (ai1);
		\draw [->,>=stealth] (ao6) -- (ai1);
		
\end{tikzpicture}
\end{center}

\end{example}

\subsection{Examples of resynchronizers}\label{ap:ex_resynchr}

\newcommand{\Trightleft}{T_{\rightarrow\leftarrow}}

	\begin{example}\label{example:resynch:long}
		The resynchronizer without parameters $\Rblock$ behaves as follows: if the origin is the first letter of an $a$-block, then it is moved to the last letter of this $a$-block. If the origin is a $b$ then it does not change.
  
$$\begin{array}{ll}
\gamma(x,y) =& (x\leq y \wedge (\forall z \in [x,y]. a(z)) \wedge \neg a(x-1) \wedge \neg a(y+1))\\
&\bigvee~(b(x)\wedge x=y)
\end{array}
$$

	\begin{center}
		\begin{tikzpicture}[node distance=.4cm]
		
		\node (i1) {$a$};
		\node[right= of i1] (i2) {$a$};
		\node[right= of i2] (i3) {$a$};
		\node[right= of i3] (i4) {$b$};
		\node[right= of i4] (i5) {$a$};
		\node[right= of i5] (i6) {$a$};
		\node[right= of i6] (i7) {$b$};
		
		\node[below=1cm of i2] (o1) {$c$};
		\node[right= of o1] (o2) {$d$};
		\node[right= of o2] (o3) {$c$};
		\node[right= of o3] (o4) {$d$};

		\node[left=0.2cm of i1]	(labeli) {Input:};
		\node[left=1.0cm of o1] (labelo) {Output:};
		
		\draw [before] (o1) -- (i1);
		\draw [before] (o2) -- (i4);
		\draw [before] (o3) -- (i5);
		\draw [before] (o4) -- (i7);
		
		\draw [after] (o1) -- (i3);
		\draw [after] (o3) -- (i6);
		\end{tikzpicture}
	\end{center}
	Here is an example of behaviour of the same resynchronizer, applied to a two-way transducer  $\Trightleft$ doing two passes of the input word, one left-to-right and one-right-to-left, and outputting a new letter at each alternation of input letters $a$ and $b$.
		\begin{center}
		\begin{tikzpicture}[node distance=.4cm]
		
		\node (i1) {$a$};
		\node[right= of i1] (i2) {$a$};
		\node[right= of i2] (i3) {$a$};
		\node[right= of i3] (i4) {$b$};
		\node[right= of i4] (i5) {$a$};
		\node[right= of i5] (i6) {$a$};
		\node[right= of i6] (i7) {$b$};
		
		\node[below=1cm of i1] (o1) {$c$};
		\node[right= of o1] (o2) {$d$};
		\node[right= of o2] (o3) {$c$};
		\node[right= of o3] (o4) {$d$};
		\node[right= of o4] (o5) {$c$};
		\node[right= of o5] (o6) {$d$};
		\node[right= of o6] (o7) {$c$};

		\node[left=0.2cm of i1]	(labeli) {Input:};
		\node[left=0.2cm of o1] (labelo) {Output:};
		
		\draw [before] (o1) -- (i1);
		\draw [before] (o2) -- (i4);
		\draw [before] (o3) -- (i5);
		\draw [before] (o4) -- (i7);
		\draw [before] (o5) -- (i6);
		\draw [before] (o6) -- (i4);
		\draw [before] (o7) -- (i3);
		
		\draw [after] (o1) -- (i3);
		\draw [after] (o3) -- (i6);
		\end{tikzpicture}
	\end{center}	
\end{example}
\begin{example}\label{ex:1tolast}\cite{bose2018origin}
We give the example of $\Rfirstlast=(\top,\top,\gamma,\top)$: a resynchronizer without parameters, with $\gamma(x,y)=(x=\textit{first})\wedge(y=\textit{last})$, allowing only the resynchronization of origins from the first input position to the last one, and no other origins in the new origin graph.
	\begin{center}
	
	\begin{tikzpicture}[node distance=.4cm, inner sep=.1cm]
		\node(i1) {$a$};
		\node[right= of i1] (i2) {$b$};
		\node[right= of i2] (i3) {$b$};
		\node[right= of i3] (i4) {$a$};
		\node[right= of i4] (i5) {$b$};
		\node[right= of i5] (i6) {$a$};
		
		\node[below=1cm of i1] (o1) {$c$};
		\node[right= of o1] (o2) {$d$};
		\node[right= of o2] (o3) {$d$};
		\node[right= of o3] (o4) {$d$};
		\node[right= of o4] (o5) {$c$};
		\node[right= of o5] (o6) {$c$};

		\node[left=0.4cm of i1]	(labeli) {Input:};
		\node[left=0.4cm of o1] (labelo) {Output:};
		
		\draw [after] (o1) -- (i6);
		\draw [after] (o2) -- (i6);
		\draw [after] (o3) -- (i6);
		\draw [after] (o4) -- (i6);
		\draw [after] (o5) -- (i6);
		\draw [after] (o6) -- (i6);
		
		\draw [before] (o1) -- (i1);
		\draw [before] (o2) -- (i1);
		\draw [before] (o3) -- (i1);
		\draw [before] (o4) -- (i1);
		\draw [before] (o5) -- (i1);
		\draw [before] (o6) -- (i1);
	\end{tikzpicture}
\end{center}
\end{example}

\begin{example}\label{ex:contain1stlast}
Let $\Tlast,\Tfirst$ be the two transducers from Example \ref{ex:fullorigin}, and $\Rfirstlast$ the MSO resynchronizer from Example \ref{ex:1tolast}.
Then we have $\Tfirst\subseteq \Rfirstlast(\Tlast)$.
\end{example}

\begin{example}\label{ex:fastslow}
Let us give an example of two transducers $\Tfast$,$\Tslow$ with $\sem{\Tfast}=\sem{\Tslow}=\{(a^n,a^m)\mid n,m\in\N\}$, and $\Tslow\preceq\Tfast$ but $\Tfast\not\preceq\Tslow$.
\begin{center}
\begin{tikzpicture}[node distance = 2cm, initial text=]
\node[initial,state] (init) {$p_0$};
\node[accepting,state,right of = init] (success) {$p_1$};

\node[below right=.4cm and.7cm of init,anchor=north] {Transducer $\Tfast$};

\path[->]
  (init) edge[loop above] node[above] {$\eps|a$} (init)
  (init) edge node[above] {$\eps|\eps$} (success)
  (success) edge[loop above] node[above] {$a|\eps$} (success)
  ;

 \node[initial,state, right=3cm of success] (init2) {$q_0$};
\node[accepting,state,above right =.6cm and 2cm of init2] (success2) {$q_1$};
\node[accepting,state,below right =.6cm and 2cm of init2] (succ3) {$q_2$};

\node[below right=.4cm and .1cm of init2,anchor=north] {Transducer $\Tslow$};

\path[->]
  (init2) edge[loop above] node[above] {$a|a$} (init2)
  (init2) edge node[above] {$a|\eps$} (success2)
  (success2) edge[loop above] node[above] {$a|\eps$} (success2)
    (init2) edge node[above] {$\eps|a$} (succ3)
  (succ3) edge[loop above] node[above] {$\eps|a$} (succ3)
  ;
\end{tikzpicture}
\end{center}
Indeed, we have $\Tslow\subseteq R(\Tfast)$ where $R$ uses only $\gamma(x,y)=(x=\textit{first})$, which is bounded. However, if we had $\Tfast\subseteq R'(\Tslow)$, then $R'$ would need to redirect arbitrarily many positions to the first one, and therefore it could not be bounded.
\end{example}

\subsection{The original definition of resynchronizers}\label{ap:alpha}

We now give the original definition of MSO resynchronizers from \cite{bose2018origin,bose2019synthesis}, that we will call here \emph{extended MSO resynchronizer}, to emphasize the difference with our simplified version..

In addition to input parameters, extended MSO resynchronizers are also allowed to guess \emph{output parameters}, labelling the output word.

Given an origin graph $\sigma=(u,v,\orig)$, an output parameter is a subset of the output positions, encoded by a word on $\B$. Thus, a valuation for $n$ output parameters are given by $\bar{O} = (O_1, \dots, O_n) \in (\B^{|v|})^n$. 
Given an output alphabet $\Gamma$ and a number $n$ of output parameters, we define the set of \emph{output-types} as $\Gamma\times\B^n$. The role of an output-type is to describe a possible labelling of an output position, including the value of output parameters.
More precisely, given $v\in \Gamma^*$, $\bar{O}=(O_1,\dots,O_m)\in(\B^{|v|})^n$ and $x\in\dom(v)$, we call output-type of $x$ the element $\tau=(a,b_1,\dots,b_m)\in\Gamma\times\B^n$ obtained by projecting each coordinate of $(v,O_1,\dots,O_m)$ onto its $x^{th}$ position.
Notice that in the absence of output parameters, an output-type is simply a letter from $\Gamma$.

We can now give the definition of extended MSO resynchronizers:

	\begin{definition}\cite{bose2018origin}
		An MSO resynchronizer $R$ with $m$ input parameters and $n$ output parameters is a tuple $(\alpha, \beta, \gamma, \delta)$, where
		\begin{itemize}
			\item $\alpha(\bar{I})$ is an MSO formula over the input word with input parameters $\bar{I}=(I_1,\dots,I_m)$.
			\item $\beta(\bar{O})$ is an MSO formula over the output word with output parameters $\bar{O}=(O_1,\dots,O_n)$.  
			\item For every output-type $\tau\in\Gamma \times \B^n$, $\gamma(\tau)$ is an MSO formula with $m+2$ free variables: $\gamma(\tau)(\bar{I}, x,y)$ over the input word $u$, that indicates that the origin $x$ of an output position of type $\tau$ can be redirected to a new origin $y$.
			\item For every pair of output-types $\tau_1,\tau_2$, $\delta(\tau_1,\tau_2)$ is an MSO formula with $m+2$ free variables: $\delta(\tau_1,\tau_2)(\bar{I},z_1,z_2)$ over the input word $u$ is required to hold if $z_1,z_2$ are the new origins of two consecutive output positions $x_1,x_2$ with type $\tau_1,\tau_2$ respectively.
		\end{itemize}
	\end{definition}


We now describe formally the semantics of a extended MSO resynchronizer.

	\begin{definition} \cite{bose2018origin}
		An MSO resynchronizer $R=(\alpha,\beta,\gamma,\delta)$ induces a relation $\sem{R}$ on origin graphs in the following way. If $\sigma = (u, v,\orig)$ and $\sigma' = (u',v',\orig')$ are two origin graphs, we have $(\sigma, \sigma')\in \sem{R}$ if and only if $u = u', v=v'$, and there exists input parameters $\bar{I} \in (\B^{|u|})^m$, $\bar{O} \in (\B^{|v|})^n$, such that the following requirements hold:
		\begin{itemize}
			\item $(u, \bar{I}) \models \alpha$
			\item $(v, \bar{O}) \models \beta$
			\item For every output position $x\in dom(v)$ of type $\tau$, we have $(u, \bar{I}, \orig(x),\orig'(x)) \models \gamma(\tau)$
			\item For all consecutive output positions $x_1,x_2\in \dom(v)$ of type $\tau_1,\tau_2$ respectively, we have $(u,\bar{I},\orig'(x_1),\orig'(x_2)) \models \delta(\tau_1,\tau_2)$. 
		\end{itemize}
	\end{definition}

For examples making use of all components, see \cite{bose2018origin}.

We also recall the definition of boundedness for extended MSO resynchronizers:
	\begin{definition}\cite{bose2018origin} (Boundedness)
		A regular resynchronizer $R$ has bound $k$ if for all inputs $u$, input parameters $\bar{I}$, output-types $\tau\in \Gamma \times \B^n$, and target position $y\in dom(u)$, there are at most $k$ distinct positions $x_1, \dots x_k \in dom(u)$ such that $(u, \bar{I}, x_i, y) \models \gamma(\tau)$ for all $i\in[1,k]$. A regular resynchronizer is bounded if it is bounded by $k$ for some $k\in \N$.
	\end{definition}

Now, moving to simplified MSO resynchronizer in the present work is justified by the following Lemma:
	
	\begin{lemma}\label{lem:onlygamma}
If $R=(\alpha,\beta,\gamma,\delta)$ is a bounded extended MSO resynchronizer, then there exists a simplified MSO resynchronizer $R'$ that is also bounded, such that $\sem{R}\subseteq\sem{R'}$.
So if for two transducers $T_1$ and $T_2$ the relation $T_1\preceq T_2$ holds, as witnessed by a bounded extended resynchronizer, then it is also witnessed by a bounded simplified resynchronizer.
\end{lemma}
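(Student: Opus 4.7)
The plan is to observe that, among the four components of an extended resynchronizer $R=(\alpha,\beta,\gamma,\delta)$, the formulas $\alpha$, $\beta$, $\delta$ and the dependence of $\gamma$ on the output-type $\tau$ all serve only to \emph{restrict} which pairs of origin graphs lie in $\sem{R}$. Since we only need $\sem{R}\subseteq\sem{R'}$, the natural strategy is to drop all of these restrictions and keep a single formula playing the role of $\gamma$, using the same $m$ input parameters $\bar{I}$ as in $R$.

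Concretely, I would define the simplified resynchronizer $R'$ by the formula
\[ \gamma'(\bar{I},x,y) \ :=\ \bigvee_{\tau \in \Gamma \times \B^n} \gamma(\tau)(\bar{I},x,y), \]
a finite disjunction indexed by all possible output-types. Showing $\sem{R}\subseteq\sem{R'}$ is then essentially direct: given $(\sigma,\sigma')\in\sem{R}$ witnessed by $(\bar{I},\bar{O})$, I reuse the same $\bar{I}$ in $R'$; for every output position $z$ with type $\tau_z$, the clause $\gamma(\tau_z)(\bar{I},\orig(z),\orig'(z))$ holds by hypothesis, hence $\gamma'(\bar{I},\orig(z),\orig'(z))$ holds. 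The formulas $\alpha,\beta,\delta$ simply have no counterpart in the simplified definition and so do not need to be verified.

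Boundedness is then a one-line counting step: since $R$ has bound $k$, each disjunct contributes at most $k$ sources to any given target $y$ and any input parameter valuation $\bar{I}$, so $\gamma'$ admits at most $k \cdot |\Gamma| \cdot 2^n$ sources per target, which is a constant. The only subtlety I expect is to be explicit about why we cannot recover the output-type inside $\gamma'$: output parameters label positions of the output word, whereas the simplified definition only allows input parameters, so we must absorb the output-type by ranging over all of $\Gamma\times\B^n$ rather than trying to test it position-by-position. The consequence about witnessing $\preceq$ follows immediately: if $T_1\subseteq R(T_2)$, the inclusion $\sem{R}\subseteq\sem{R'}$ gives $T_1\subseteq R'(T_2)$ with the same $T_2$-runs, and $R'$ is a bounded simplified resynchronizer as required.
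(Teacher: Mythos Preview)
Your proposal is correct and follows essentially the same approach as the paper: define $\gamma'$ as the disjunction $\bigvee_{\tau}\gamma(\tau)$ over all output-types, observe that $\alpha,\beta,\delta$ only restrict the semantics so $\sem{R}\subseteq\sem{R'}$ is immediate with the same input parameters, and bound $R'$ by $k\cdot|\Gamma\times\B^n|$ via pigeonhole. The paper's proof is identical in structure and content.
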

\begin{proof}
Let $m$ be the number of input parameters of $R$, and $\Theta$ its set of output-types.
The simplified resynchronizer $R'$ will use $m$ input parameters as well, and is defined by the formula 
$$\gamma'=\bigcup_{\tau\in\Theta} \gamma(\tau).$$

Let $k\in\N$ be such that $R$ is bounded by $k$. Let $K=k*|\Theta|$, we show that $R'$ is bounded by $K$.
Indeed, assume there are an input word $u$ labelled with input parameters $\bar{I}$, $K+1$ distinct positions $x_1,\dots, x_{K+1}$, and a position $y$, such that $(u,\bar{I},x_i,y)$ for all $i\in[1,K+1]$. Then by pigeonhole principle, there exists $\tau$ such that $(u,\bar{I},x_i,y)\gamma(\tau)$ is true for $k+1$ distinct values of $i$. This contradicts the fact that $R$ is bounded by $k$.

Finally, the fact that $\sem{R}\subseteq\sem{R'}$ is straightforward from the definition of $R'$: the presence of output parameters forcing $\gamma$ to use one of its disjuncts, and the addition of constraints $\alpha,\beta,\delta$, only restrict the semantics of a resynchronizer. Any pair of origin graphs $(\sigma,\sigma')$ accepted by $R$ is accepted by $R'$ as well, using the same input parameters as witness.
This means that if an extended resynchronizer $R=(\alpha,\beta,\gamma,\delta)$ witnesses $T_1\preceq T_2$, then $R'$ as defined here witnesses it as well.

\end{proof}

Therefore, as far as the relation $\preceq$ is concerned, we can assume that all bounded resynchronizers are in simplified form, and we do so throughout the paper.

\subsection{Proof of Lemma \ref{lem:transitive}}\label{ap:transitive}

We want to show that $\preceq$ is reflexive and transitive.

Let $T$ be a 2NT, we have $T\preceq T$, witnessed by the MSO resynchronizer $\gamma(x,y)= (x=y)$. This resynchronizer preserves the strict origin semantics, and is bounded by $1$. This shows reflexivity of $\preceq$.
\medskip

Let $T_1,T_2,T_3$ be 2NTs such that $T_1\preceq T_2 \preceq T_3$.
This means there exists $R_1,R_2$ bounded such that $T_1\subseteq R_1(T_2)$ and $T_2\subseteq R_2(T_3)$.
Let $m_1,\gamma_1$ (resp. $m_2,\gamma_2$) be the numbers of input parameters and MSO formula of $R_1$ (resp. $R_2$).
We define a resynchronizer $R$ with $m=m_1+m_2$ input parameters, by 
$$\gamma(\bar{I},x_3,x_1)=\exists x_2.\gamma_1(\tau_1)(\bar{I_1},x_2,x_1)\wedge\gamma_2(\bar{I_2},x_3,x_2),$$
where $\bar{I_1}$ (resp. $\bar{I_2}$) is obtained from $\bar{I}$ by restriction to the first $n_1$ (resp. last $n_2$) components.
The formula $\gamma$ guesses a valid position $x_2$ for the position of the origin according to $T_2$, and uses it to redirect the origin from $x_3$ to $x_1$ directly.

It remains to verify that $R$ is a witness that $T_1\preceq T_3$, i.e. that $T_1\subseteq R(T_3)$.
Let $\sigma_1=(u,v,\orig_1)\in\osem{T_1}$, we know from $T_1\subseteq R_1(T_2)$ that there exists $\sigma_2=(u,v,\orig_2)\in\osem{T_2}$ such that $(\sigma_2,\sigma_1)\in\sem{\gamma_1}$, witnessed by parameters $\bar{I_1}$. From $T_2\subseteq R_2(T_3)$, there exists $\sigma_3=(u,v,\orig_3)\in\osem{T_3}$ such that $(\sigma_3,\sigma_2)\in\sem{\gamma_2}$, witnessed by parameters $\bar{I_2}$. Let us show that $(\sigma_3,\sigma_1)\in\sem{R}$. Let $\bar{I}$ be the concatenation $\bar{I_1}\cdot\bar{I_2}$.
Let $x\in\dom(v)$ be an output position. We need to show that $(u,\bar{I},\orig_3(x),\orig_1(x))\models\gamma$. For $i\in\{1,2,3\}$ let $x_i=\orig_i(x)$.
We have $(u,\bar{I_1},x_2,x_1)\models\gamma_1$ and $(u,\bar{I_2},x_3,x_2)\models\gamma_2$, therefore, by definition of $\gamma$, we have $(u,\bar{I},x_3,x_1)\models\gamma$.
This concludes the proof of $T_1\subseteq R(T_3)$.

\subsection{Proof of Lemma \ref{lem:rk}}\label{ap:proof_rk}

Each input position $x$ that can be redirected to the right (resp. left) is labelled by some $\Rsource$ (resp. $\Lsource$). Notice that these labels are not exclusive, and a position $x$ can a priori have many such labels. However our construction ensures that every position $x$ has at most one  right label and one left label.
		
		We construct an algorithm that builds the input parameters $\Lsource, \Rsource$ such that it witnesses $(\sigma,\sigma') \in \sem{R_k}$. We will describe how to assign $\Rsource$ parameters, the left variant is symmetrical. The parameter variable $\Rsource$ starts with value $\emptyset$ for each $i\in[0,k-1]$, and will be filled with new positions during the run of the algorithm.
		
		Now let $\Rdist = \{x_1, \dots , x_n\}\subseteq \dom(u)$ be the set (indexed in increasing order) of positions $x$ such that there exists an output position $t$ with $\orig(t)=x$ and $\orig'(t)>x$, i.e. $\Rdist$ is the set of positions that can be redirected to the right. The algorithm makes a left to right pass of the input positions in $\Rdist$, starting at $x_1$. When treating $x_j\in\Rdist$ it does the following:
		
		\begin{enumerate}
			\item Set $\free = \{ i ~|~ \forall x\in\Rsource, x\textit{ does not traverse } x_j\}$.
			\item If $\free$ is empty, then output ``error'' and stop, otherwise let $\imin$ be the minimal element of $\free$, and add $x_j$ to  $\Rparam_{\imin}$.
		\end{enumerate}
		
	
		If the algorithm never outputs ``error'', then by construction these input parameters witness $(\sigma, \sigma')\in \sem{R_k}$. Indeed, if a position $x$ traverses a position $z$, the algorithm cannot give the same label $\Rsource$ to both $x$ and $z$.
		
Notice that in the algorithm, the set of free indexes is recomputed from scratch at every step. Equivalently, we could remember for each $i$ the rightmost redirection target $y_i$ of the position $s_i$ currently labelled by $\Rsource$, and free index $i$ when we reach position $y_i$.		
		
		We prove that ``error'' will never be output, under the $k$-traversal hypothesis on $(\sigma,\sigma')$. Assume for contradiction that at stage $j$, $\free$ is empty. This means that for all $i\in[0,k-1]$, there is a position $s_i\in\Rsource$ that traverses $x_j$. These $s_i$ are all distinct, since by construction an input position is only added to at most one input parameter $\Rsource$. This shows that position $x_j$ is traversed by $k$ positions strictly before $x_j$, and since it also traverses itself, we have a contradiction with the $k$-traversal assumption.

\subsection{Construction of domino tiles}\label{ap:domino}

A \emph{configuration} of $M$ is the data of a tape content, a state, and the position of the head on the tape. Such a configuration will be encoded by a word of $\Gamma^*$ of the form $u\cdot q a\cdot v\#$, with $u,v\in A^*$, $q\in Q$, and $a\in A$. The symbol $\#$ is used as a separator, allowing to concatenate configurations to form a computation history of $M$. When necessary, intermediary configurations are interleaved to add blank symbols at the extremity of the tape.

The word $u\cdot q a\cdot v\#$ encodes a tape $uav$, with a machine in state $q$ currently reading the marked letter $a$.

The full computation history of $M$ on empty input is a finite or infinite sequence of configurations, and can be encoded by a single word $\hist\in\Gamma^*\cup\Gamma^\omega$, obtained by concatenation of the encodings of the successive configurations.

We will now associate a finite set of \emph{tiles} $D_M$ to the machine $M$.
Each tile of $D_M$ is indexed by an integer $i$, and consists of a pair of words $(u_i,v_i)\in(\Gamma^*)^2$.

The set $D_M$ contains the following tiles:
\begin{itemize}
\item for every $a\in A\cup\{\#\}$, a copy tile $(a,a)$,
\item for every right moving transition $\delta(p,a)=(q,b,\toright)$, a right tile $(pa,bq)$,
\item for every $q\in Q$, a right expansion tile $(q\#,qB\#)$,
\item for every left moving transition $\delta(p,a)=(q,b,\toleft)$, and every letter $c\in a$, a left tile $(cpa,qcb)$, as well as a left expansion tile $(\#pa, \#qBb)$.
\end{itemize}

Notice that we omitted to include a start tile $(\eps,q_0\#)$ in $D_M$, as we will encode it explicitly in the reduction. Let $\Sigma\subseteq\N$ be the finite set of indexes of tiles from $D_M$.
In the classical proof of undecidability of the Post Correspondence Problem \cite{PCP}, these tiles are designed to simulate the run of $M$ as specified by Lemma \ref{lem:domino}.


\subsection{Undecidability of $\boundtape$}\label{ap:boundtape}

\begin{lemma}
		For a deterministic Turing Machine $M$ it is undecidable whether $M \in \boundtape$. 
	\end{lemma}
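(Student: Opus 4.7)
The plan is to reduce from the Halting Problem on empty input, which is undecidable. Given a deterministic Turing machine $M'$, I will effectively construct a deterministic Turing machine $M$ such that $M \in \boundtape$ if and only if $M'$ halts on empty input.

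The construction is standard. On empty input, $M$ simulates $M'$ step by step in a dedicated left portion of its tape, and simultaneously maintains a \emph{frontier marker} in a separate right portion (for a single-tape machine, this can be realized by expanding the tape alphabet with decorated symbols so that the two regions coexist). After each simulated step of $M'$, and before running the next one, $M$ moves to the current frontier marker, shifts it one cell further to the right (writing a blank on the freshly visited cell), and returns to continue the simulation. If the simulation of $M'$ ever enters a halting state, $M$ halts as well.

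The correctness argument has two directions. If $M'$ halts on empty input in $T$ steps, then $M$ performs $T$ simulation-plus-frontier-shift phases and then halts; the simulation region is bounded because $M'$'s halting computation visits only finitely many cells, and the frontier has only been shifted $T$ times, so altogether $M$ visits finitely many cells, giving $M \in \boundtape$. Conversely, if $M'$ never halts on empty input, then $M$ runs forever and the frontier marker is shifted arbitrarily far to the right, so $M$ visits unboundedly many cells, giving $M \notin \boundtape$.

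Since the map $M' \mapsto M$ is clearly computable, this yields a many-one reduction from the Halting Problem on empty input to $\boundtape$, from which undecidability of $\boundtape$ follows. I do not anticipate a real obstacle here: the only substantive design choice is to ensure that the frontier-extension step truly visits a brand-new cell on every simulation step, which is immediate once $M$ is permitted to write blanks one past its current frontier. The rest is routine bookkeeping of tape regions.
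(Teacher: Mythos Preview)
Your proposal is correct and follows essentially the same approach as the paper: reduce from the halting problem on empty input by building a machine that, while simulating the given machine, is forced to consume a fresh tape cell at every step, so that bounded tape usage coincides with termination. The paper achieves this by having the new machine write out the full computation history of the simulated machine (which grows by at least one configuration per step), whereas you use an explicit frontier marker; these are interchangeable implementations of the same idea.
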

	\begin{proof}
		We reduce from the halting problem on an empty tape. Consider a deterministic Turing machine $M$, we build a new Turing machine $M'$ which simulates $M$ by writing the full computation history of $M$ on its tape. This new machine $M'$ halts if and only if the computation of $M$ halts. Moreover, $M'$ halts if and only if $M'\in\boundtape$, regardless of the tape usage of $M$. Therefore, we have that $M$ halts on empty input if and only if $M'\in\boundtape$, which is the wanted reduction.
	\end{proof}
	
%
%
%
%


%
%
%

\subsection{Undecidability results for rational transducers}\label{ap:rational}

\newcommand\dl[2]{{{#1} \choose {#2}}}


We recall here briefly the definition of rational resynchronizations for 1NTs. See \cite{filiot2016equivalence} for a full presentation.

The notion of origin graph is replaced here by \emph{interleaved word}: we assume the input alphabet $\Sigma$ and the output alphabet $\Gamma$ to be disjoint, and we represent the origin information of a pair $(u,v)\in\Sigma^*\times\Gamma^*$ by a word $w\in(\Sigma\cup\Gamma)^*$, such that when keeping only the letters from $\Sigma$ (resp. $\Gamma$) in $w$, we obtain the word $u$ (resp. $v$).
The origin of an output letter $v_i\in\Gamma$ is then given by the letter $u_j$ of $\Sigma$ immediately preceding it in $w$.

Thus, a resynchronization is now a set of pairs of interleaved words $(w,w')$, stating that the origins encoded by $w$ can be changed to those encoded by $w'$. Notice that the length of $w$ and $w'$ are always equal, so such a pair can be seen as a word on alphabet $(\Sigma\cup\Gamma)^2$

A resynchronization is \emph{rational} if it is a regular language on alphabet $(\Sigma\cup\Gamma)^2$.

\begin{example}
Let us recall the origin graphs from Example \ref{example:resynch:long}.
\begin{center}
		\begin{tikzpicture}[node distance=.4cm]
		
		\node (i1) {$a$};
		\node[right= of i1] (i2) {$a$};
		\node[right= of i2] (i3) {$a$};
		\node[right= of i3] (i4) {$b$};
		\node[right= of i4] (i5) {$a$};
		\node[right= of i5] (i6) {$a$};
		\node[right= of i6] (i7) {$b$};
		
		\node[below=1cm of i2] (o1) {$c$};
		\node[right= of o1] (o2) {$d$};
		\node[right= of o2] (o3) {$c$};
		\node[right= of o3] (o4) {$d$};

		\node[left=0.2cm of i1]	(labeli) {Input:};
		\node[left=1.0cm of o1] (labelo) {Output:};
		
		\draw [before] (o1) -- (i1);
		\draw [before] (o2) -- (i4);
		\draw [before] (o3) -- (i5);
		\draw [before] (o4) -- (i7);
		
		\draw [after] (o1) -- (i3);
		\draw [after] (o3) -- (i6);
		\end{tikzpicture}
\end{center}
		
The blue origin graph would be encoded by the interleaved word \textcolor{blue}{$acaabdacabd$}, and the red one by \textcolor{red}{$aaacbdaacbd$}.
So this particular resynchronization pair is represented by the pair of words $(\textcolor{blue}{acaabdacabd}, \textcolor{red}{aaacbdaacbd})$, that we can represent in columns to visualize the alphabet $(\Sigma\cup\Gamma)^2$:
$$
\left(
\begin{array}{c}
\textcolor{blue}{acaabdacabd}\\
\textcolor{red}{aaacbdaacbd}
\end{array}
\right)
$$
The resynchronizer $\Rblock$ from Example \ref{example:resynch:long} is rational, as witnessed by the following regular expression on alphabet $(\Sigma\cup\Gamma)^2$:
\newcommand{\eblock}{e_{\mathit{block}}}
\newcommand{\ebd}{e_{bd}}
$$
(\ebd)^*\left( \eblock(\ebd)^+\right)^*\eblock(\ebd)^*$$ 
$$\text{where }\ebd =\dl bb \dl dd\text{ and  }\eblock=\dl aa \left(\dl cc+ \dl ca \dl aa^*\dl ac\right)
$$

\end{example}

In particular it is shown in \cite{filiot2016equivalence} that the shift resynchronizations are rational (under the name \emph{bounded delay resynchronisers}).

\newcommand{\Tupalt}{T'_{up}}
\newcommand{\Tdownalt}{T'_{down}}

As mentioned in Section \ref{sec:undec}, since the shift resynchronizations are rational, and that any rational resynchronization is in particular bounded regular \cite[Theorem 3]{bose2019synthesis}, our reduction from Section \ref{sec:undec} can be used in particular as an alternative proof of undecidability of rational resynchronization synthesis, shown in \cite{bose2019synthesis} via one-counter automata. 
This means we directly obtain Corollary \ref{cor:rational}:

\rational*

We can further strengthen the result via Theorem \ref{thm:rationalundecidable}:

\ratreg*


 We prove this by a small modification of the construction of $\Tup$ from the undecidability proof in Section \ref{sec:undec}. We design $\Tupalt$  such that it either simulates $\Tup$, or outputs an arbitrary word with origin on the first input letter and then finishes. The transducer $\Tupalt$ is represented below:

\begin{center}
	\begin{tikzpicture}[node distance = 2cm, initial text=]
	\node[initial,state] (init) {$q_0$};
	\node[state, above right of= init] (fail) {$q_1$};
	\node[state,accepting, right=1.5cm of fail] (last) {$q_2$};
	
	\node[right= of init] (success) {$\Tup$};
	
	\node[below right=.2cm and .7cm of init,anchor=north] {Transducer $\Tupalt$};
	
	\path[->]
	(init) edge node[above=.1cm] {$\eps|\eps$} (fail)
	(fail) edge[loop above] node[above] {$\eps|\Gamma$} (fail)
	(fail) edge node[above] {$i|\eps$} (last)
	(last) edge[loop above] node[above] {$i|\eps$} (last)
	(init) edge node[above] {$\eps|\eps$} (success)
	;
	\end{tikzpicture}
\end{center}
We have $\Tdown \preceq \Tupalt$, witnessed by the bounded resynchronizer $R$ defined by $\gamma(x,y)=first(x)$. In this resynchronizer, any origin pointing to the first input letter can be resynchronized to any input position. However, $R$ is not rational, and the existence of a rational resynchronizer witnessing $\Tdown\preceq\Tupalt$ reduces to $\boundtape$.

\begin{lemma}\label{lem:regtorat}
There exists a rational resynchronization $R$ such that $\Tdown \subseteq R(\Tupalt)$ if and only if $M\in \boundtape$.
\end{lemma}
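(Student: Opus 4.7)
The plan is to prove both directions. For the forward direction, since $M \in \boundtape$, Theorem \ref{thm:reduction} already supplies a shift resynchronizer $R$ with $\gamma(x,y)=(y\le x)\wedge(x\le y+K+2)$ witnessing $\Tdown \subseteq R(\Tup)$; shift resynchronizers are rational (bounded-delay), and since $\Tup$ is directly reachable from the initial state of $\Tupalt$ via an $\eps|\eps$ transition we have $\osem{\Tup}\subseteq\osem{\Tupalt}$, so the same $R$ also witnesses $\Tdown \subseteq R(\Tupalt)$.

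For the converse, the key idea is to exploit the bounded-delay property of rational resynchronizers: a regular language on $(\Sigma\cup\Gamma)^2$ recognized by an automaton of size $N$ can only pair two interleaved words whose $\Sigma$-projections stay within $O(N)$ columns of each other during column-by-column reading. I would derive from this that any rational resynchronizer $R$ is contained in a $d$-shift resynchronizer, i.e.\ there exists $d\in\N$ such that whenever $(\sigma,\sigma')\in\sem{R}$ and an output position has origins $x$ in $\sigma$ and $y$ in $\sigma'$, then $|x-y|\le d$. Assuming $R$ is rational with $\Tdown \subseteq R(\Tupalt)$, I obtain such a bound $d$, and aim for a contradiction from $M\notin\boundtape$.

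For the contradiction, I would pick $\lambda$ that is a valid tile sequence ($u_\lambda\sqsubseteq v_\lambda$, following $\hist$) witnessing a configuration of size larger than $d$, and let $\sigma'$ be the unique origin graph of $\Tdown$ on $\lambda$. Any $\sigma\in\osem{\Tupalt}$ producing $v_\lambda$ falls in one of three cases: (i) through the $q_1\to q_2$ branch of $\Tupalt$, all origins lie on $i_1$, so the shift grows with $|\lambda|$ and exceeds $d$; (ii) through the $\Tup$ run avoiding $\pfail$, $\Tup$ lags one configuration behind $\Tdown$ exactly as in Example \ref{ex:updown}, so the shift matches the configuration size and exceeds $d$; (iii) through $\Tup$ entering $\pfail$ — this case is vacuous since, when $u_\lambda\sqsubseteq v_\lambda$, the only way for $u_{\lambda'}w$ to be a prefix of $v_\lambda$ with $w\in W_{i_j}$ and $|w|\le|u_{i_j}|$ forces $w$ to be a prefix of $u_{i_j}$, contradicting $w\in W_{i_j}$. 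In every case, no admissible $\sigma$ satisfies $(\sigma,\sigma')\in\sem{R}$, contradicting $\Tdown \subseteq R(\Tupalt)$.

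The main obstacle I expect is justifying rigorously that rational resynchronizers correspond to bounded origin shifts. The argument should go through the DFA $\mathcal{A}$ recognizing $R$: reading the two interleaved words in parallel, the difference between the number of $\Sigma$-letters consumed in the two tracks is bounded by the pumping constant of $\mathcal{A}$ (otherwise one could pump an unbounded imbalance, forcing the projections to disagree). Since input projections are synchronized up to that imbalance, the positional shift of any output's origin is bounded by the same constant. Once this lemma is in place, the three-case analysis above is routine and completes the reduction, so the undecidability of $\boundtape$ yields the statement.
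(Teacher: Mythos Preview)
Your forward direction is correct and matches the paper's approach.

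The converse direction has a genuine gap: the key lemma you propose---that every rational resynchronizer is contained in some $d$-shift resynchronizer---is false. The resynchronizer $\Rblock$ from Example~\ref{example:resynch:long} is a counterexample: the paper exhibits an explicit regular expression for it on $(\Sigma\cup\Gamma)^2$, so it is rational, yet it moves the origin from the first to the last letter of an $a$-block of arbitrary length, so the origin shift is unbounded. Your pumping argument does correctly bound the column-wise imbalance in the number of $\Sigma$-letters between the two tracks, but this does \emph{not} bound the origin shift. Concretely, for $w=ac\,a^{n-1}bd$ versus $w'=a^{n}cbd$, the $\Sigma$-imbalance at every column is at most $1$, while the origin of $c$ shifts by $n-1$. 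The step ``since input projections are synchronized up to that imbalance, the positional shift of any output's origin is bounded'' is therefore a non-sequitur.

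The paper avoids this by not proving any general bounded-shift property of rational resynchronizers. Instead it uses a pumping argument tailored to the specific shape of the pairs arising from the $q_1\!\to\! q_2$ branch of $\Tupalt$: those pairs look like $\left(\begin{smallmatrix} i_1 v_1 v_2 \cdots v_n i_2 \cdots i_n \\ i_1 v_1 i_2 v_2 \cdots i_n v_n \end{smallmatrix}\right)$, and pumping any factor of length $\geq 2$ inside the portion $\binom{v_1}{v_1}\binom{v_2}{i_2}\binom{v_3}{v_2}\cdots$ produces a pair whose $\Gamma$-projections disagree, contradicting that $R$ is a resynchronization. This rules out the $q_1\!\to\! q_2$ branch for inputs longer than the automaton for $R$, after which Theorem~\ref{thm:reduction} applies verbatim. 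You should replace your general bounded-shift lemma with this structure-specific pumping argument.
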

\begin{proof}
Let $R$ be a rational resynchronizer such that for all graph $\sigma' \in \osem{\Tdown}$, there exists a graph $\sigma\in \osem{\Tupalt}$ such that $(\sigma,\sigma')\in \sem{R}$. We show that when the input word is long enough, the graph $\sigma$ corresponds to a run of $\Tupalt$ simulating $\Tup$. Assuming the contrary, we would obtain that the rational resynchronizer $R$ contains arbitrarily long pairs $p_n$ of the form $\left(\begin{array}{c}i_1v_1v_2\dots v_ni_2\dots i_n\\ i_1v_1i_2v_2\dots\dots i_nv_n\end{array}\right)$, with $i_j\in\Sigma$ and $v_j\in\Gamma$ for all $j$. Let $n$ be bigger than twice the number of states of a DFA $\A$ recognizing the rational resynchronization on alphabet $(\Sigma\cup\Gamma^2)$. We can pump a factor of length at least $2$ in the factor $\dl {v_1}{v_1}\dl {v_2}{i_2}\dl {v_3}{v_2}\dots\dl {v_n}{x}$ of the pair $p_n$. This way we can produce pairs of words accepted by $\A$, but whose projection to $\Gamma$ do not match, i.e. the output word is not the same before and after resynchronization. This means that $\A$ is not the automaton of a rational resynchronization, a contradiction. We obtained that there exists a constant $k\in\N$ such that for inputs longer than $k$, $\Tupalt$ behaves as $\Tup$. Thus the proof of Theorem \ref{thm:reduction} can now be used to show that a rational resynchronizer exists if and only if $M\in \boundtape$. This uses the fact that a $k$-shift resynchronization is rational, and that any rational resynchronization is in particular regular \cite[Theorem 3]{bose2019synthesis}.
\end{proof}

\end{document}